\theoremstyle{plain}
\newtheorem{theorem}{Theorem}[section]
\newtheorem{lemma}[theorem]{Lemma}
\newtheorem{corollary}[theorem]{Corollary}
\theoremstyle{definition}
\theoremstyle{remark}
\DeclareBoldMathCommand{\ba}{a}
\DeclareBoldMathCommand{\bb}{b}
\DeclareBoldMathCommand{\bc}{c}
\DeclareBoldMathCommand{\balpha}{\alpha}
\DeclareBoldMathCommand{\bbeta}{\beta}
\DeclareBoldMathCommand{\brho}{\rho}
\DeclareBoldMathCommand{\btheta}{\theta}
\DeclareBoldMathCommand{\bdelta}{\delta}
\DeclareBoldMathCommand{\bS}{S}
\DeclareBoldMathCommand{\bx}{x}
\DeclareBoldMathCommand{\bw}{w}
\DeclareBoldMathCommand{\bv}{v}
\newcommand{\Reals}{\mathbb R}
\newcommand{\Beta}{\mathrm{Beta}}
\newcommand{\df}{\vcentcolon=}
\icmltitlerunning{Anytime-Valid Inference for Multinomial Count Data}
\begin{document}

\twocolumn[
\icmltitle{Anytime-Valid Inference for Multinomial Count Data}



\icmlsetsymbol{equal}{*}

\begin{icmlauthorlist}
\icmlauthor{Michael Lindon}{netflix}
\icmlauthor{Alan Malek}{deepmind}
\end{icmlauthorlist}

\icmlaffiliation{netflix}{Netflix, New York, United States}
\icmlaffiliation{deepmind}{DeepMind, London, United Kingdom}

\icmlcorrespondingauthor{Michael Lindon}{michael.s.lindon@gmail.com}
\icmlcorrespondingauthor{Alan Malek}{alanmalek@google.com}

\icmlkeywords{Machine Learning, 
ICML, 
Confidence Sequence, 
Sequential Testing,
Experimentation,
AB Testing,
Hypothesis Testing,
Canary Testing}

\vskip 0.3in
]



\printAffiliationsAndNotice{} 

\begin{abstract}
Many experiments are concerned with the comparison of counts between treatment groups. Examples include the number of successful signups in conversion rate experiments, or the number of errors produced by software versions in canary experiments. Observations typically arrive in data streams and practitioners wish to continuously monitor their experiments, sequentially testing hypotheses while maintaining Type I error probabilities under optional stopping and continuation. These goals are frequently complicated in practice by non-stationary time dynamics. We provide practical solutions through sequential tests of multinomial hypotheses, hypotheses about many inhomogeneous Bernoulli processes and hypotheses about many time-inhomogeneous Poisson counting processes. For estimation, we further provide confidence sequences for  multinomial probability vectors, all contrasts among probabilities of inhomogeneous Bernoulli processes and all contrasts among intensities of time-inhomogeneous Poisson counting processes. Together, these provide an "anytime-valid" inference framework for a wide variety of experiments dealing with count outcomes, which we illustrate with a number of industry applications.
\comment{

OLD: This paper presents confidence sequences for the probability vector of a multinomial distribution and all contrasts in log-probabilities and log-intensities among several time-inhomogeneous Bernoulli and Poisson counting processes respectively. Sequential tests for equality among arbitrary many such processes are provided. These confidence sequences are used to develop sequential $p$-values for real-time testing of flexible hypotheses. Together, these provide an "anytime-valid" framework for inference in many applications involving count data, controlling Type I errors under the operations of optional stopping/continuation and continuous monitoring. These methods are demonstrated with three applications relevant to online controlled experiments: sample ratio mismatch testing, conversion rate optimization, and software canary testing.}
\end{abstract}

\section{Introduction}
\label{Introduction}
In many areas of experimentation, one is frequently interested in the comparison of \textit{counts} between treatment groups of an experiment (arms). Examples abound, such as a web developer wishing to compare the number of newly subscribed users among different versions of a signup page or a DevOps engineer wishing to compare the number of errors logged by different software versions \cite{bayesiansoftware}. In these applications and others, controlled experiments can test statistical hypotheses based solely on the counts observed within each arm.

However, performing experiments incurs a cost, which is often linear in the sample size. The cost could stem from orchestrating an expensive experiment or several opportunity costs.
There is a reward-associated opportunity cost in assigning an experimental unit to a particular arm when the observable outcome may be better under another. If a unit cannot be assigned to multiple experiments simultaneously, then there is also a learning-associated opportunity cost in using a unit to test a particular hypothesis when the scientific conclusions may be stronger testing another. A cost-minimizing goal of the statistical methodology is then to test hypotheses with the smallest sample size possible while maintaining the same statistical guarantees. Faster conclusion of experiments improves the overall agility of the experimenter, increasing the number of hypotheses tested and the rate of learning. We have included a critique of fixed-$n$ testing, the advantages of sequential testing and corresponding literature review in the appendix.
\subsection{Motivating Applications}

\subsubsection{Conversion Rate Optimization}
\citet{waldsequential} described an experiment to investigate whether modifications to a firearm significantly increase its accuracy.
The original and the modified versions are fired simultaneously at a target and a Bernoulli outcome is recorded for each (1 if the target was hit, 0 otherwise). This is then repeated for $n$ attempts. Under the null hypothesis, the probabilities of each firearm hitting the target at each attempt are equal. However, the success probability varies across attempts due to gusty wind conditions. As both firearms are fired simultaneously, it is assumed that the wind conditions at the time of each attempt affect each firearm equally.
If one firearm is to be selected for accuracy, choosing the gun that obtained the greatest number of successes seems reasonable.

Many online conversion experiments share similarities with this example. A conversion is a Bernoulli trial such as user signup or purchase. It is often assumed that the Bernoulli success probabilities are constant, but in practice the success probabilities are frequently evolving in time due to external factors such as day of the week, recent product launches, or new promotions. These external factors are likely to affect all arms of the experiment equally, and so under the null hypothesis each arm can be considered an inhomogeneous Bernoulli process with identically time-varying success probabilities.

Another complication encountered in practice is that Bernoulli fail outcomes are typically not directly observed. If a Bernoulli success occurs, such as a signup or purchase, an event is logged. If, however, this does not occur, then nothing is logged and a Bernoulli fail must be inferred from the absence of a Bernoulli success. This is typically done by defining an interval of time after the assignment to a treatment arm in which the visitor must convert, otherwise it is considered a Bernoulli fail. This can make the analysis sensitive to the choice of time interval.

We provide a sequential test for equality among arbitrary many time-inhomogeneous Bernoulli processes based solely on the counts of successes. To facilitate comparisons among arms, we construct confidence sequences for all contrasts of log-probabilities, assuming that these are constant, and provide additional sequential hypothesis tests thereof.

\subsubsection{Software Canary Testing}
When continuously deploying new software to users, DevOps engineers often adopt the practice of \textit{canary testing} \cite{canarytesting}. A canary test is a controlled experiment in which users are randomly assigned to the current software or a newer release candidate. The experimenter's goal is to study the performance of the release candidate in a production environment before releasing it globally, essentially acting as a quality control gate before full deployment. If the release candidate has significantly worse performance, it is blocked, and developers must resolve the offending issues. This strategy helps to prevent bugs from reaching all users. 
However, performance regressions are still experienced by those in the experiment. 
To minimize harm to these users, performance regressions should be detected in real-time, and the canary terminated as soon as possible. This necessitates sequential testing methods.

Performance regressions are measured in terms of the counts of \textit{events}. These events are sent to a central logging service by each instance of the software. These events could be negative such as errors or failures, and any increase is considered an undesirable performance regression. Alternatively these events could be positive, such as the software logging the successful completion of an action like playback or sign-in, and any decrease is considered an undesirable performance regression.

The data naturally forms a marked 1-dimensional point process in time, recording the timestamp and type of event. The instantaneous rate is expected to be time-varying due to varying traffic and usage patterns.

We model the data as an inhomogeneous Poisson point process in time and develop a sequential test for equality among arbitrary many such processes based on the counts of points. To facilitate comparisons among arms, we construct confidence sequences for all contrasts of log-intensities, assuming that these are constant, and provide additional sequential tests for hypotheses thereof.

\subsubsection{Sample Ratio Mismatch Testing}
Comparing the counts of experimental units assigned to each arm of a multi-arm experiment can often detect bias and errors in the experiment.
Most online controlled experiments follow simply randomized designs, whereby a new unit is randomly assigned to one of $d$ arms according to a vector of pre-specified probabilities $\btheta$. The assignment outcome for a new unit is independent of other units (individualistic), unit-level covariates, potential outcomes (unconfounded) and can therefore be summarized as an independent $\mathrm{Multinomial}(1,\btheta)$ random variable. Under these assumptions, the assignment mechanism can be considered ignorable when performing inference on causal estimands such as the average treatment effect \citep{rubin}. 

Although simple in theory, the systems that perform assignments quickly grow in complexity as the number of concurrent experiments increases \citep{googleinfrastructure}.
This increased complexity increases the risk of introducing bugs that cause departures from the intended assignment mechanism, breaking the assumption of ignorability and rendering causal estimates invalid. \citet{zhao} provides an account of an incorrect hashing algorithm introducing bias into the assignment mechanism.

After assignment and measurement, data passes through processing and cleansing pipelines before analysis. If incorrect cleansing logic is applied, there is a risk that specific observations may be selectively removed, introducing a ``missing not at random" missing data mechanism, rendering causal estimates invalid \cite{missing-data}. \citet{Fabijan} describes an experiment in which units from the treatment arm were unintentionally removed, with the probability of their removal depending on their observed outcomes.

An arm with a surprisingly low or high number of units is usually symptomatic of an implementation error in the experiment and is colloquially referred to as a \textit{sample ratio mismatch} (SRM) \cite{Fabijan}. These errors can be caught by comparing the counts of experimental units in each arm against the intended assignment probabilities.
It is now considered a good practice to validate the experiment setup by performing a $\chi^2$ test, comparing the observed counts against the expected counts under the intended assignment mechanism \cite{chen}. However, the $\chi^2$ test is an example of a \textit{fixed}-$n$ test, providing statistical guarantees when performed \textit{once}. Due to this limitation, it is typically performed after data collection and prior to analysis. To reveal a bug that renders an expensive experiment invalid, only after the experiment is finished, would be less than ideal. Ideally SRMs are detected as early as possible so that the implementation error can be corrected before more units enter the experiment. This necessitates sequential testing. We provide a sequential multinomial test for testing a point null based on the counts of each outcome.

\section{A Sequential Multinomial Test}
\label{sec:sequential_multinomial_test}
Our development of sequential tests for different kinds of count data begins with a sequential test for multinomial observations. The construction follows the following sequence of steps common in the literature \cite{shafer, samplingreplacement, howard}.
Step 1: Define a relevant Bayes factor.
Step 2: Show that the Bayes factor is a nonnegative supermartingale under the null hypothesis.
Step 3: Use martingale inequalities to construct a test martingale that controls the frequentist Type I probability below a desired level $u$.
Step 4: Invert the sequential test based on the test martingale to obtain a confidence sequence with a coverage guarantee of at least $1-u$.
 Extensions to other kinds of count data, including Bernoulli, Binomial and Poisson counting processes, are then obtained by recognizing relationships that exist to the multinomial distribution.

    For step 1, consider a sequence $\bx_1, \bx_2, \bx_3, \dots$ of independent $\mathrm{Multinomial}(1,\btheta)$ random variables with $\btheta \in \triangle^d$, the $d-1$ simplex. We use bold typeset to denote vectors. Under the null hypothesis, $M_0$, it is assumed that
\begin{equation}
  \label{eq:multinomialassignment}
  \bx_1,\bx_2, \dots | M_0 \stackrel{\text{i.i.d.}}{\sim} \mathrm{Multinomial}(1,\btheta_0).
\end{equation}
To construct a model over alternatives, $M_1$, we place a conjugate Dirichlet prior over alternative values of $\btheta$
\begin{align}
\label{eq:alternativemodel}
    \bx_1,\bx_2, \dots &| \btheta, M_1 \stackrel{\text{i.i.d.}}{\sim} \mathrm{Multinomial}(1,\btheta),\\
  \btheta &| M_1 \sim \text{Dirichlet}(\balpha_0).\notag
\end{align}
The following expressions are simplest with a uniform prior over the simplex achieved by setting $\alpha_{0,i}=1$. In many applications, however, we expect departures from the null to be small and encode this information into the Dirichlet prior by concentrating it about $\btheta_0$ with the choice $\alpha_{0,i} = k \theta_{0,i}$ for a concentration parameter $k \in \mathbb{R}^+$.
Let $S_i^n=\sum_{j=1}^{n}x_{j,i}$ and $\bS_n=(S_1^n,\ldots, S_d^n)\in\Reals^d$.
In addition, let $|\bv| = \sum_{i} v_i$ denote the element-wise sum of a vector $\bv$, 
$\bv^{\bw} = \prod_{i} v_{i}^{w_i}$ to denote element-wise exponentiation of two vectors $\bv$ and $\bw$,
and $\Beta$ to denote the multivariate Beta function $\Beta(\bv) \df (\prod_{i}\Gamma(v_i))/\Gamma(\sum_i v_i)$. 
The resulting Bayes factor comparing models $M_1$ to $M_0$ is given by
\begin{equation}
  \label{eq:bayes_factor}
 BF_{10}(\bx_{1:n}) = \frac{\Beta(\balpha_0 + \bS_n)}{\Beta(\balpha_0)}\frac{1}{\btheta_0^{\bS_n}}.
\end{equation}
which appears as early as \cite{good} (derivation in \cref{app:bayes_factor}). In a Bayesian analysis, the Bayes factor multiplied by the prior odds gives the posterior odds of $M_1$ over $M_0$. In this work, we take the prior odds to be unity so that the terms Bayes factor and posterior odds can be used interchangeably.
\comment{\begin{equation}
\begin{split}
  \label{eq:bayes_factor}
 BF_{10}(\bx_{1:n}) =  &\frac{\Gamma(\sum_{j=1}^{d} \alpha_{0,j})}{\Gamma(\sum_{j=1}^{d} \alpha_{0,j} + \sum_{i=1}^{n}x_{i,j})}\\
 &\frac{\prod_{j=1}^{d}\Gamma(\alpha_{0,j} + \sum_{i=1}^{n}x_{i,j} )}{\prod_{j=1}^{d}\Gamma(\alpha_{0,j} )}\\
 &\frac{1}{\prod_{j=1}^{d} \theta_{0,j}^{\sum_{i=1}^{n}x_{i,j}}},
 \end{split}
\end{equation}}

In sequential applications, it often makes sense to compute \cref{eq:bayes_factor} recursively. Let $O_n(\btheta_0)$ denote the posterior odds at $n$, then
\begin{equation}
  \label{eq:update_rule}
  O_{n}(\btheta_0) = \frac{\Beta(\balpha_{n-1} + \bx_n)}{\Beta(\balpha_{n-1})}\frac{1}{\btheta_0^{\bx_n}} O_{n-1}(\btheta_0),
\end{equation}
where $\balpha_n = \balpha_{n-1}+\bx_n$ and $O_0(\btheta_0)=p(M_1)/p(M_0)=1$. Details are provided in \cref{app:odds_updating}. 
The dependence of $O_n(\btheta_0)$ on the observed data $\bx_{1:n}$ is implicit in this notation, yet the null value $\btheta_0$ being tested is made explicit to aid the discussion of confidence sequences in Theorem~\ref{thm:confidence_sequence}.

Step 2 in our construction is to demonstrate that this is a nonnegative supermartingale under the null hypothesis $M_0$.
\begin{theorem}
  \noindent Let $\bx_1,\bx_2, \dots$ be a sequence of independent $\mathrm{Multinomial}(1,\btheta)$ random variables, $\mathcal{F}_{n-1} = \sigma(\bx_1,\bx_2,\dots,\bx_{n-1})$ and consider the sequence of posterior odds $O_n(\btheta_0)$ defined in Equation \eqref{eq:update_rule} with $O_0(\btheta_0)=1$, then 
  \begin{equation}
      \mathbb{E}_{M_0}[O_n(\btheta_0) | \mathcal{F}_{t-1}] = O_{n-1}(\btheta_0)
  \end{equation}
  \label{thm:posterior_odds_martingale}
\end{theorem}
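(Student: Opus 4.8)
The plan is to verify the equality directly from the recursive update in \eqref{eq:update_rule}, collapsing the whole statement to the conditional expectation of a single multiplicative increment. Since $\balpha_{n-1} = \balpha_0 + \bS_{n-1}$ and $O_{n-1}(\btheta_0)$ are both $\mathcal{F}_{n-1}$-measurable, I would pull $O_{n-1}(\btheta_0)$ outside the conditional expectation, so that the claim reduces to showing
\[
  \mathbb{E}_{M_0}\!\left[\frac{\Beta(\balpha_{n-1}+\bx_n)}{\Beta(\balpha_{n-1})}\frac{1}{\btheta_0^{\bx_n}} \,\Big|\, \mathcal{F}_{n-1}\right] = 1.
\]
The entire argument is then a one-step calculation.

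Next I would use that under $M_0$ the increment $\bx_n$ is independent of $\mathcal{F}_{n-1}$ and is a single $\mathrm{Multinomial}(1,\btheta_0)$ draw, i.e.\ it equals the one-hot vector $\mathbf{e}_i$ with probability $\theta_{0,i}$. Writing the conditional expectation as a finite sum over the $d$ outcomes and noting that the element-wise exponentiation convention gives $\btheta_0^{\mathbf{e}_i} = \theta_{0,i}$, the assignment probabilities cancel the $1/\theta_{0,i}$ factors exactly, leaving
\[
  \sum_{i=1}^{d}\theta_{0,i}\,\frac{\Beta(\balpha_{n-1}+\mathbf{e}_i)}{\Beta(\balpha_{n-1})}\frac{1}{\theta_{0,i}} = \sum_{i=1}^{d}\frac{\Beta(\balpha_{n-1}+\mathbf{e}_i)}{\Beta(\balpha_{n-1})}.
\]

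The remaining step is to evaluate each Beta ratio. Using $\Beta(\bv)=\prod_i\Gamma(v_i)/\Gamma(|\bv|)$ together with the recursion $\Gamma(z+1)=z\,\Gamma(z)$, incrementing coordinate $i$ by one multiplies the numerator by $\alpha_{n-1,i}$ and replaces $\Gamma(|\balpha_{n-1}|)$ by $\Gamma(|\balpha_{n-1}|+1)=|\balpha_{n-1}|\,\Gamma(|\balpha_{n-1}|)$, so that $\Beta(\balpha_{n-1}+\mathbf{e}_i)/\Beta(\balpha_{n-1}) = \alpha_{n-1,i}/|\balpha_{n-1}|$. Summing over $i$ yields $\sum_i \alpha_{n-1,i}/|\balpha_{n-1}| = 1$, which closes the argument. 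I do not expect a genuine obstacle: the only point requiring care is that $\bx_n$ contributes exactly one unit to a single coordinate, so the Beta ratio is evaluated at a unit coordinate shift rather than some other perturbation, after which everything follows from the Gamma recursion. Finally, nonnegativity of $O_n(\btheta_0)$ --- needed for the supermartingale claim motivating Step~2 --- is immediate, since $\btheta_0$ and every $\balpha_n$ have strictly positive entries, making the established equality in fact a martingale rather than merely a supermartingale.
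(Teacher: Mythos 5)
Your proof is correct, but it proceeds by a genuinely different route than the paper's. The paper's proof is a one-line abstract argument: it recognizes the multiplicative increment in \cref{eq:update_rule} as the ratio of posterior predictive densities $p(\bx_n|\bx_{1:n-1},M_1)/p(\bx_n|\bx_{1:n-1},M_0)$, so that integrating against the null predictive density cancels the denominator and leaves $\int p(\bx_n|\bx_{1:n-1},M_1)\,d\bx_n = 1$ --- an argument that works verbatim for \emph{any} Bayes factor with \emph{any} prior, with no model-specific computation. You instead verify the identity by hand for the Dirichlet--multinomial pair: after pulling out the $\mathcal{F}_{n-1}$-measurable factor $O_{n-1}(\btheta_0)$ (a step the two proofs share), you expand the conditional expectation as a finite sum over the $d$ one-hot outcomes, cancel the assignment probabilities against $1/\btheta_0^{\bx_n}$, and evaluate $\Beta(\balpha_{n-1}+\mathbf{e}_i)/\Beta(\balpha_{n-1}) = \alpha_{n-1,i}/|\balpha_{n-1}|$ via the Gamma recursion, summing to $1$. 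In doing so you have implicitly re-derived the Dirichlet--multinomial posterior predictive probabilities from \cref{eq:posterior_predictive_m1}, and your final sum equaling one is exactly the statement, made concrete, that the $M_1$ predictive distribution is a probability distribution --- which is all the paper's proof uses. What your approach buys is elementarity and explicitness: independence of $\bx_n$ from $\mathcal{F}_{n-1}$ under $M_0$ is stated rather than hidden in conditional-density notation, and the verification is a finite algebraic check requiring no integration. What the paper's approach buys is generality: it makes clear the martingale property is structural to Bayes factors rather than an artifact of conjugacy, which matters for the paper's broader framework. One small caveat applies to both proofs equally: the exact cancellation of $\theta_{0,i}$ against $1/\theta_{0,i}$ presumes $\btheta_0$ lies in the interior of the simplex (as does the Bayes factor itself), which you may wish to note when asserting the process is a martingale rather than merely a supermartingale.
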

The proof is found in \cref{app:odds_martingale}. \cref{thm:posterior_odds_martingale} states that $O_n(\btheta_0)$ is a nonnegative martingale under the null hypothesis with respect to the canonical filtration.

Step 3 is to use the posterior odds to construct a test martingale.
\begin{theorem}  
  \label{thm:type_1_error}
Let $\bx_1,\bx_2, \dots$ be a sequence of independent $\mathrm{Multinomial}(1,\btheta)$ random variables and consider the sequence of posterior odds $O_n(\btheta_0)$ defined in Equation \eqref{eq:update_rule} with $O_0(\btheta_0)=1$. Then
\begin{equation}
  \label{eq:type_1_error}
  \mathbb{P}_{\btheta = \btheta_0}\left( \exists n \in \mathbb{N}: O_n(\btheta_0) \geq 1/u \right) \leq u
\end{equation}
for all $u \in [0,1]$.
\end{theorem}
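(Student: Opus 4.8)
The plan is to combine the martingale property just established in \cref{thm:posterior_odds_martingale} with the maximal inequality for nonnegative supermartingales (Ville's inequality). Under the null, \cref{thm:posterior_odds_martingale} gives that $O_n(\btheta_0)$ is a nonnegative martingale with respect to the canonical filtration $\mathcal{F}_n$ with $O_0(\btheta_0)=1$; in particular it is a nonnegative supermartingale with unit initial expectation. The event $\{\exists n\in\mathbb{N}: O_n(\btheta_0)\geq 1/u\}$ is precisely $\{\sup_n O_n(\btheta_0)\geq 1/u\}$, so the claim is exactly Ville's inequality evaluated at the level $1/u$. One could therefore cite Ville's inequality directly, but I would instead record a short self-contained derivation via optional stopping, both to keep the argument elementary and to make transparent why the guarantee holds simultaneously over all sample sizes rather than at a single fixed $n$.

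Concretely, I would introduce the stopping time $\tau \df \inf\{n\geq 0: O_n(\btheta_0)\geq 1/u\}$ with the convention $\inf\emptyset = \infty$, which is a stopping time with respect to $\mathcal{F}_n$ since $\{\tau\leq n\}$ is $\mathcal{F}_n$-measurable. For each fixed $n$ the bounded stopping time $n\wedge\tau$ lets me apply the optional stopping theorem to the martingale $O_n(\btheta_0)$, giving $\ex_{M_0}[O_{n\wedge\tau}(\btheta_0)] = O_0(\btheta_0) = 1$. Restricting the expectation to the event $\{\tau\leq n\}$, using that the process first reaches or exceeds $1/u$ exactly at time $\tau$ on that event, and invoking nonnegativity of $O_n(\btheta_0)$, yields
\begin{equation}
  1 = \ex_{M_0}[O_{n\wedge\tau}(\btheta_0)] \geq \ex_{M_0}\!\left[O_\tau(\btheta_0)\,\mathbf{1}\{\tau\leq n\}\right] \geq \frac{1}{u}\,\mathbb{P}_{\btheta=\btheta_0}(\tau\leq n),
\end{equation}
so that $\mathbb{P}_{\btheta=\btheta_0}(\tau\leq n)\leq u$ for every $n$.

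The final step is the passage to the anytime guarantee: since the events $\{\tau\leq n\}$ increase to $\{\tau<\infty\}$, continuity of measure from below gives $\mathbb{P}_{\btheta=\btheta_0}(\tau<\infty) = \lim_{n\to\infty}\mathbb{P}_{\btheta=\btheta_0}(\tau\leq n)\leq u$, and $\{\tau<\infty\}$ is exactly the event $\{\exists n: O_n(\btheta_0)\geq 1/u\}$. The boundary cases are handled trivially: for $u=0$ the threshold $1/u=\infty$ is never attained because $O_n(\btheta_0)$ is almost surely finite, and for $u=1$ any probability is at most one.

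I expect the only genuinely delicate point to be the uniform-in-$n$ bound feeding into the limit. The elementary optional stopping identity controls each fixed horizon only, and it is the monotone limit over $n$ — justified by nonnegativity, which spares any uniform integrability hypothesis on the stopped process — that upgrades these per-horizon bounds into the single time-uniform statement that makes the test anytime-valid. Everything else is bookkeeping.
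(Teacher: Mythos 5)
Your proposal is correct and takes essentially the same route as the paper: both reduce the claim to Ville's maximal inequality applied to the nonnegative martingale $O_n(\btheta_0)$ with $O_0(\btheta_0)=1$ furnished by \cref{thm:posterior_odds_martingale}. The only difference is that the paper invokes Ville's inequality as a cited lemma (\cref{lem:ville}), whereas you re-derive it via optional stopping at the bounded time $n \wedge \tau$ followed by a monotone limit over $n$ --- a sound, self-contained substitute for the citation.
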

The proof is provided in \cref{app:test_martingale}. The time-uniform bound presented in Theorem~\ref{thm:type_1_error} controls the deviations of a stochastic process for all $t$ simultaneously and is essential for proving the correctness of sequential tests and verifying the optional stopping and optional continuation properties. It provides a valid stopping rule: reject the null at time $\tau = \inf \lbrace n \in \mathbb{N}: O_n(\btheta_0) \geq 1/u \rbrace$.
Simply stated, a practitioner who rejects the null hypothesis as soon as the posterior odds become larger than $1/u$ incurs a frequentist Type I error probability of at most $u$. \citet{shafer, johari} bring this idea back to more familiar territory by constructing a \textit{sequential $p$-value} by tracking the running supremum of the posterior odds and taking its inverse, or equivalently
\begin{align*}
  p_0 &=1 \text{ and}\\
  p_n &= \min(p_{n-1}, 1/O_n(\btheta_0)).
\end{align*}
It follows from this definition and equation \eqref{eq:type_1_error} that
\begin{equation}
  \label{eq:conservative_p_value}
  \mathbb{P}_{\btheta = \btheta_0}\left( \exists n \in \mathbb{N}: p_n \leq u \right) \leq u,
\end{equation}
which is an easily digestible generalization of a fixed-$n$ $p$-value to sequential settings. Instead of holding only at some pre-specified $n \in \mathbb{N}$, this guarantee holds \textit{for all} $n \in \mathbb{N}$. This construction is shown in \cref{fig:srm_p_value}. A simulation empirically demonstrating the control of false positives under continuous monitoring relative to a $\chi^2$ test is shown in \cref{app:simulation_studies}.

Before completing Step 4, it is useful to show that this sequential test is not trivial. For this test to have utility it must possess the ability to control not only Type I errors, as in theorem \ref{thm:type_1_error}, but also Type II errors. This is provided by the following theorem

\begin{theorem}
  \label{thm:consistency}
\noindent Let $\bx_1,\bx_2, \dots$ be a sequence of independent $\mathrm{Multinomial}(1,\btheta)$ random variables and consider the sequence of posterior odds $O_n(\btheta_0)$ defined in Equation \eqref{eq:update_rule} with $O_0(\btheta_0)=1$. If $\btheta \neq \btheta_0$, then
\begin{equation}
  \label{eq:consistency}
  \frac{1}{n} \log O_n(\btheta_0) \rightarrow D_{KL}(\btheta || \btheta_0) \hspace{1cm} a.s.
\end{equation}
where $D_{KL}(\btheta || \btheta_0)$ is the Kullback Leibler divergence of the true Multinomial distribution with true parameter $\btheta$ from the Multinomial distribution under the null hypothesis with null parameter $\btheta_0$.
\end{theorem}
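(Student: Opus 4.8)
The plan is to work directly from the closed form of the posterior odds in \eqref{eq:bayes_factor}, namely $O_n(\btheta_0) = \Beta(\balpha_0 + \bS_n)/\Beta(\balpha_0)\cdot\btheta_0^{-\bS_n}$, take logarithms, divide by $n$, and identify the almost-sure limit of each piece. Taking logs gives
\begin{equation*}
\frac{1}{n}\log O_n(\btheta_0) = \frac{1}{n}\log\Beta(\balpha_0 + \bS_n) - \frac{1}{n}\log\Beta(\balpha_0) - \frac{1}{n}\sum_{i=1}^d S_i^n \log\theta_{0,i}.
\end{equation*}
The middle term is a constant divided by $n$ and vanishes. For the last term, the strong law of large numbers gives $S_i^n/n \to \theta_i$ almost surely in every coordinate, so it converges to $\sum_i \theta_i\log\theta_{0,i}$. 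The work is therefore concentrated entirely in the first term.

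First I would fix the probability-one event on which $S_i^n/n \to \theta_i$ holds simultaneously for all $i$ and argue on that event. Expanding the multivariate Beta function through $\Beta(\bv) = \prod_i\Gamma(v_i)/\Gamma(|\bv|)$ and using $|\balpha_0 + \bS_n| = |\balpha_0| + n$ gives
\begin{equation*}
\log\Beta(\balpha_0 + \bS_n) = \sum_{i=1}^d \log\Gamma(\alpha_{0,i} + S_i^n) - \log\Gamma(|\balpha_0| + n).
\end{equation*}
The key tool is Stirling's approximation $\log\Gamma(z) = z\log z - z + O(\log z)$. Writing $z_i = \alpha_{0,i} + S_i^n$, dividing by $n$, and splitting $\log z_i = \log n + \log(z_i/n)$, each summand contributes a divergent piece $(z_i/n)\log n$ together with a convergent piece $(z_i/n)\log(z_i/n) - z_i/n \to \theta_i\log\theta_i - \theta_i$. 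Summing over $i$, the divergent pieces combine into $\frac{|\balpha_0| + n}{n}\log n$ since $\sum_i z_i = |\balpha_0| + n$, and the very same Stirling expansion applied to $\log\Gamma(|\balpha_0| + n)$ produces the identical divergent piece $\frac{|\balpha_0| + n}{n}\log n$. These cancel exactly, and the residual constants also cancel ($-\sum_i z_i/n \to -1$ from the sum against $+1$ from the denominator), leaving $\frac{1}{n}\log\Beta(\balpha_0 + \bS_n) \to \sum_i \theta_i\log\theta_i$.

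Combining the three limits yields $\frac{1}{n}\log O_n(\btheta_0) \to \sum_i\theta_i\log\theta_i - \sum_i\theta_i\log\theta_{0,i} = \sum_i\theta_i\log(\theta_i/\theta_{0,i}) = D_{KL}(\btheta\|\btheta_0)$, as claimed. I expect the main obstacle to be the careful bookkeeping in the Stirling step: one must verify that the $O(\log z_i)/n$ remainders are genuinely $o(1)$, confirm that the divergent $\log n$ terms carry \emph{identical} coefficients so that they cancel exactly rather than merely to leading order, and handle any coordinate with $\theta_i = 0$ separately (there $S_i^n = 0$ for all $n$ almost surely, so $\log\Gamma(\alpha_{0,i} + S_i^n)/n \to 0$ and the convention $0\log 0 = 0$ keeps that coordinate's contribution consistent with the stated limit).
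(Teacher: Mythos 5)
Your proposal is correct and follows essentially the same route as the paper's proof: expand $\log O_n(\btheta_0)$ via the closed form, write the multivariate Beta function in terms of Gamma functions, apply Stirling's approximation $\log\Gamma(z) = z\log z - z + O(\log z)$, and invoke the strong law of large numbers to identify the limit $\sum_i \theta_i \log(\theta_i/\theta_{0,i})$; your explicit $\log n$ splitting-and-cancellation is just an equivalent bookkeeping of the paper's step of combining numerator and denominator into the ratio $(\alpha_{0,i}+S^n_i)/(|\balpha_0|+n)$. If anything, your treatment of coordinates with $\theta_i = 0$ (where $S^n_i \equiv 0$ a.s., so the coordinate contributes nothing) is slightly more careful than the paper's bare appeal to the continuous mapping theorem, which is delicate at the boundary since $\log$ is not continuous at $0$.
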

The proof is given in \cref{app:asymptotic}.
Theorem \ref{thm:consistency} states that if the null hypothesis is not true, with $\btheta \neq \btheta_0$, then the Bayes factor will diverge to infinity and exceed the $1/u$ threshold in theorem \ref{thm:type_1_error} (a.s.), or equivalently that the sequential $p$-value converges to zero and falls below the $u$ threshold (a.s.). In other words, this test is guaranteed to reject the null almost surely if the null is incorrect, which is considered to be \textit{asymptotically power 1} by \citet{robbins2}. This result follows simply from the posterior consistency of Bayes factors. We now state step 4 of the construction.
\begin{theorem}
  \label{thm:confidence_sequence}
  \noindent Let $\bx_1,\bx_2, \dots$ be a sequence of independent $\mathrm{Multinomial}(1,\btheta)$ random variables and consider the sequence of posterior odds $O_n(\btheta_0)$ defined in Equation \eqref{eq:update_rule} with $O_0(\btheta_0)=1$. Let $C_n(u) = \lbrace \btheta \in \triangle^d : O_n(\btheta) < 1/u  \rbrace$ denote the set of parameter vectors that would not be rejected by the test at the $u$ level, then 
\begin{equation}
  \label{eq:confidence_sequence}
  \mathbb{P}_{\btheta}\left( \btheta \in  C_n(u) \,\, \text{for all} \,\, n \in \mathbb{N} \right) \geq 1 - u
\end{equation}
for all $u \in [0,1]$.
\end{theorem}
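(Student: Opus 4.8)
The plan is to prove this by the standard duality between sequential tests and confidence sequences: the set $C_n(u)$ is defined precisely as the collection of null values that the test fails to reject, so a coverage failure is exactly a Type I error committed against the true parameter. First I would fix the data-generating parameter $\btheta$ and work throughout under $\mathbb{P}_{\btheta}$. The quantity to bound is the probability of the complement of $\{\btheta \in C_n(u) \text{ for all } n\}$, namely $\{\exists n \in \mathbb{N}: \btheta \notin C_n(u)\}$. By the definition $C_n(u) = \lbrace \btheta' \in \triangle^d : O_n(\btheta') < 1/u \rbrace$, membership $\btheta \in C_n(u)$ is equivalent to $O_n(\btheta) < 1/u$, so its negation is $O_n(\btheta) \geq 1/u$. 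Hence the complementary event is exactly $\lbrace \exists n: O_n(\btheta) \geq 1/u \rbrace$, with the non-strict inequality matching the one appearing in \cref{thm:type_1_error}.

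The key observation is that $O_n(\btheta)$, the posterior odds evaluated at the true parameter, is nothing but the process one would track when testing the point null $\btheta_0 = \btheta$. Since the observations $\bx_1, \bx_2, \dots$ are genuinely distributed as $\mathrm{Multinomial}(1, \btheta)$ under $\mathbb{P}_{\btheta}$, this null is in fact correct, so \cref{thm:posterior_odds_martingale} applies and $O_n(\btheta)$ is a nonnegative martingale under $\mathbb{P}_{\btheta}$. I would therefore invoke \cref{thm:type_1_error} with its null value instantiated at $\btheta$, which yields $\mathbb{P}_{\btheta}(\exists n: O_n(\btheta) \geq 1/u) \leq u$.

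Combining these steps and passing to the complement gives
\begin{equation*}
\mathbb{P}_{\btheta}\left( \btheta \in C_n(u) \,\, \text{for all} \,\, n \right) = 1 - \mathbb{P}_{\btheta}\left( \exists n: O_n(\btheta) \geq 1/u \right) \geq 1 - u,
\end{equation*}
which is the claim, and it holds for every $u \in [0,1]$.

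The step I expect to require the most care is conceptual rather than computational: one must recognize that \cref{thm:type_1_error} holds uniformly over the choice of null value, so that it may legitimately be applied with the null set equal to the true (and unknown) $\btheta$ on the event where the data are actually generated from $\btheta$. The only technical point worth double-checking is that the strict inequality defining $C_n(u)$ and the non-strict inequality in the Type I bound are exactly complementary, which they are, so no boundary case of the event $\lbrace O_n(\btheta) = 1/u \rbrace$ is mishandled.
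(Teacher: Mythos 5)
Your proof is correct and takes exactly the approach the paper intends: the paper offers no separate proof of \cref{thm:confidence_sequence}, treating it as the standard inversion of the sequential test (Step 4 of its construction), i.e., applying \cref{thm:type_1_error} with the null value instantiated at the true $\btheta$ and passing to the complement, which is precisely your argument, inequality orientations included. Your only imprecise phrase---that the bound must hold ``uniformly over the choice of null value''---is harmless, since \cref{thm:type_1_error} is a pointwise statement valid for every fixed $\btheta_0$ and is simply evaluated at $\btheta_0 = \btheta$ under $\mathbb{P}_{\btheta}$.
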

A simple corollary of theorem \ref{thm:confidence_sequence} is that $\mathbb{P}_{\btheta}\left( \btheta \in \bigcap_{n=1}^{\infty} C_n(u) \right) \geq 1- u$. This result provides a confidence statement for sequentially estimating the true parameter vector $\btheta$ as the experiment progresses. The confidence set $C_n(u)$ for $\btheta$ is a convex subset of $\triangle^d$, with convexity following from the concavity of the multinomial log-likelihood. Confidence intervals on the individual elements of $\btheta$ can be obtained by projecting $C_n(u)$ onto the coordinate axes in the following manner.
\begin{corollary}
\label{cor:marginal_ci}
\noindent For $C_n(u)$ as in Theorem~\ref{thm:confidence_sequence}, let
\begin{align*}
j_{n,i}^+(u) &= \sup \{\theta_i : \btheta \in C_n(u)\},\\
j_{n,i}^-(u) &= \inf \{\theta_i : \btheta \in C_n(u)\},
\end{align*}
then 
\begin{equation}
  \label{eq:marginal_confidence_sequence}
  \mathbb{P}_{\btheta}\left(
    \forall i: \theta_i \in \bigcap_{n=1}^{\infty} [j_{n,i}^-(u),j_{n,i}^+(u)]
  \right) \geq 1- u.
\end{equation} 
\end{corollary}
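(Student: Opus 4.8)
The plan is to derive the marginal statement directly from the joint confidence-sequence guarantee of Theorem~\ref{thm:confidence_sequence} by a pure set-containment argument, so that no new martingale or probabilistic machinery is required. Fix the true parameter $\btheta$ and define the ``good'' event
\begin{equation*}
  A = \{ \btheta \in C_n(u) \text{ for all } n \in \mathbb{N} \}.
\end{equation*}
Theorem~\ref{thm:confidence_sequence} already supplies $\mathbb{P}_{\btheta}(A) \geq 1-u$, so it suffices to show that $A$ is contained in the event appearing on the left-hand side of \eqref{eq:marginal_confidence_sequence}.

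First I would establish the coordinatewise containment pointwise in $n$. By construction, $j_{n,i}^-(u)$ and $j_{n,i}^+(u)$ are the infimum and supremum of the $i$-th coordinate taken over all vectors in $C_n(u)$. Hence, on the event $A$, since $\btheta \in C_n(u)$ for every $n$, its $i$-th coordinate must satisfy
\begin{equation*}
  j_{n,i}^-(u) \leq \theta_i \leq j_{n,i}^+(u),
\end{equation*}
that is, $\theta_i \in [j_{n,i}^-(u), j_{n,i}^+(u)]$, and this holds simultaneously for every index $i$ and every $n$. Intersecting over $n$ then yields $\theta_i \in \bigcap_{n=1}^{\infty} [j_{n,i}^-(u), j_{n,i}^+(u)]$ for every $i$, which is precisely the event
\begin{equation*}
  B = \Bigl\{ \forall i : \theta_i \in \bigcap_{n=1}^{\infty} [j_{n,i}^-(u), j_{n,i}^+(u)] \Bigr\}.
\end{equation*}
Therefore $A \subseteq B$, and the conclusion follows from monotonicity of probability: $\mathbb{P}_{\btheta}(B) \geq \mathbb{P}_{\btheta}(A) \geq 1-u$.

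I expect the only genuine subtlety, rather than a real obstacle, to be the measurability of the projected endpoints $j_{n,i}^{\pm}(u)$, which are extrema of a single coordinate over the random set $C_n(u)$. Because $C_n(u)$ is a convex region defined through the continuous statistic $O_n(\cdot)$ and because the bound is driven entirely by the event $A$, the containment $A \subseteq B$ is all that is needed and the claimed inequality is immediate. In essence, the corollary is just the projection of the joint coverage statement onto each axis, trading the convex set $C_n(u)$ for the easier-to-report but looser product of marginal intervals.
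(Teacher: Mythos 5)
Your proof is correct and matches the paper's (implicit) argument exactly: the paper presents the corollary as an immediate projection of the joint coverage event from Theorem~\ref{thm:confidence_sequence} onto the coordinate axes, which is precisely your set-containment $A \subseteq B$ plus monotonicity of probability. No gaps; your aside on measurability is the only point the paper doesn't even mention, and it is indeed harmless here.
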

$j_{n,i}^+(u)$ can be computed by solving the following convex optimization program
\begin{equation}
\label{eq:marginal_ci_optimization}
    \begin{split}
    \text{max} \quad &\theta_i\\
        \text{s.t.} \quad & c+\log u \leq \sum_{i}S^n_i \log \theta_i\\
        &\sum_{i} \theta_i = 1
    \end{split}
\end{equation}
where $c = \log\Beta(\balpha_0 + \bS_n) - \log\Beta(\balpha_0)$. The constraints in the optimization program simply define $C_n(u)$. Similarly, $j_{n,i}^-(u)$ is obtained by minimizing $\theta_i$ over this set.

The confidence sequences obtained in \cref{eq:marginal_confidence_sequence} are shown in \cref{fig:srm_cis} from a simulation with $\btheta = (0.1,0.3,0.6)$. The sequential p-value for testing $\btheta_0 =(0.1,0.4,0.5)$ on the same dataset is shown in \cref{fig:srm_p_value}. The sequential p-value is less than $0.05$ for all $n\geq 144$. This is the smallest $n$ for which $\btheta_0 \not \in C_n(0.05)$, as shown in \cref{fig:srm_ternary}.

\begin{figure}[ht]
\vskip 0.2in
\begin{center}
\centerline{\includegraphics[width=\columnwidth]{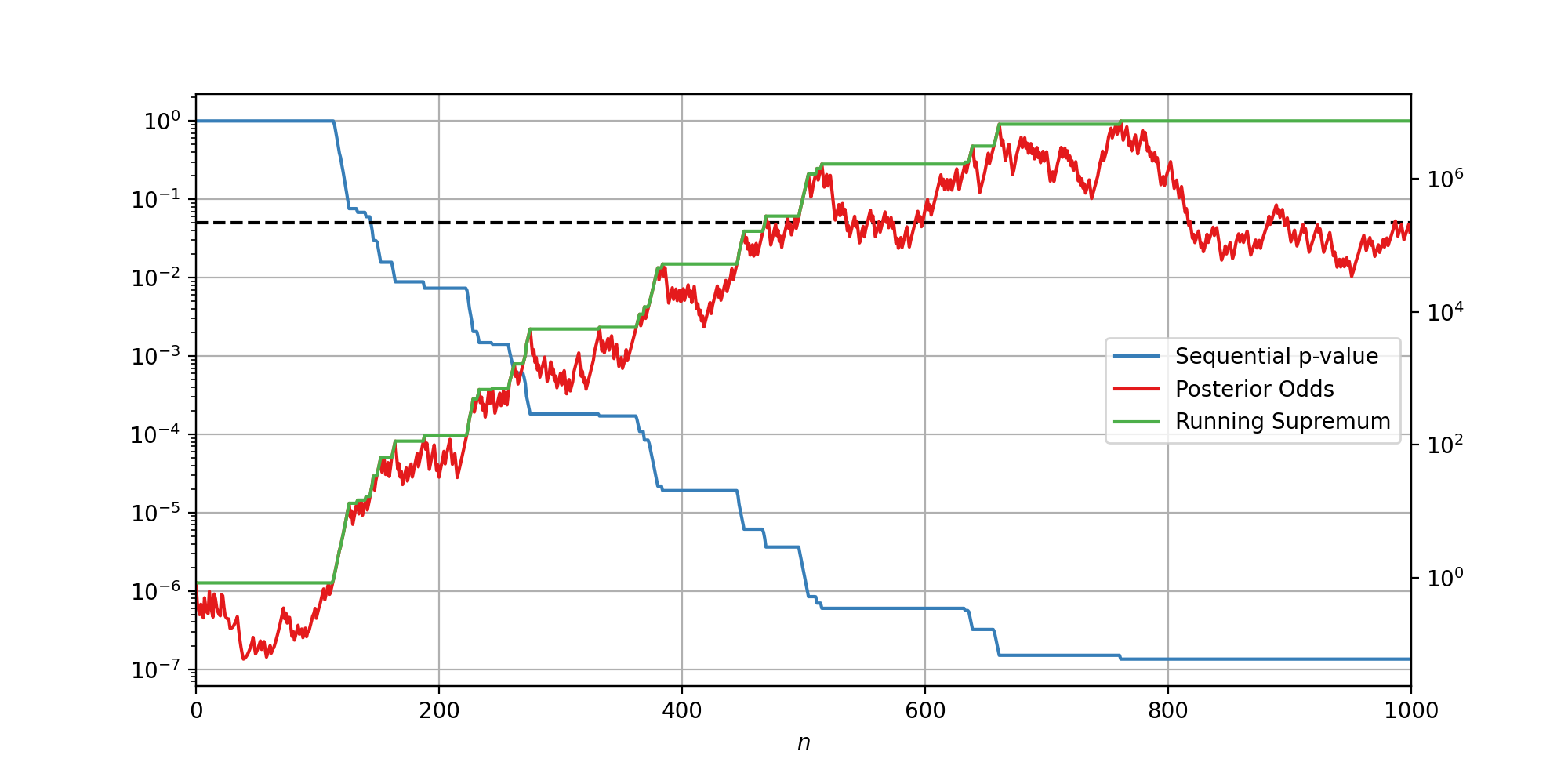}}
\caption{(Left axis) Sequential $p$-value (blue) defined in \cref{eq:conservative_p_value}. Critical value $u=0.05$ (dashed-black). $\btheta = (0.1,0.3,0.6)$, $\btheta_0 =(0.1,0.4,0.5)$. $p_n < 0.05$ for all $n\geq 144$. (Right axis) The posterior odds defined in \cref{eq:bayes_factor} (red), with the running supremum (green).}
\label{fig:srm_p_value}
\end{center}
\vskip -0.2in
\end{figure}

\begin{figure}[ht]
\vskip 0.2in
\begin{center}
\centerline{\includegraphics[width=\columnwidth]{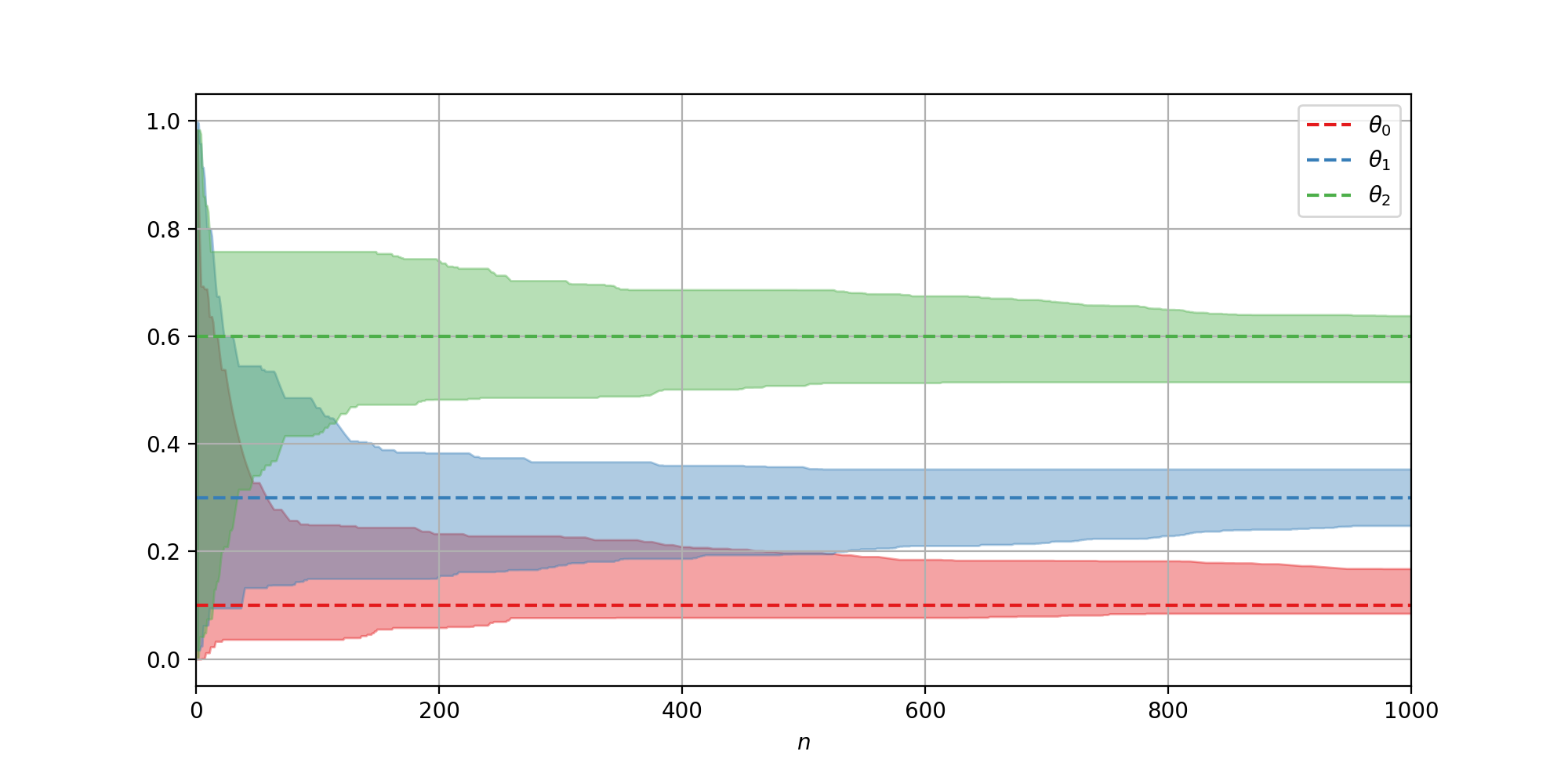}}
\caption{Simultaneous 0.95 confidence sequences that cover the individual elements of $\btheta = (0.1, 0.3, 0.6)$ obtained from \cref{cor:marginal_ci} and computed via the optimization program in \cref{eq:marginal_ci_optimization}.}
\label{fig:srm_cis}
\end{center}
\vskip -0.2in
\end{figure}

\begin{figure}[ht]
\vskip 0.2in
\begin{center}
\centerline{\includegraphics[width=\columnwidth]{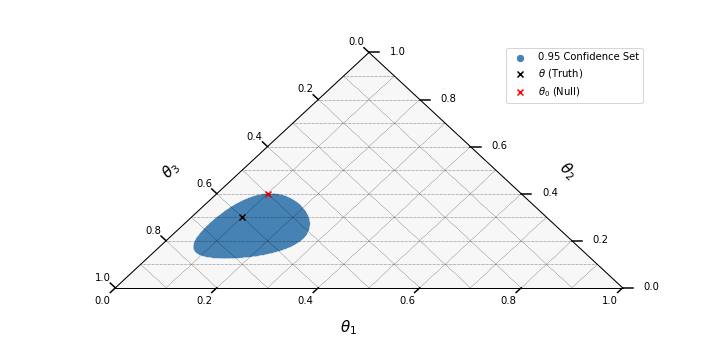}}
\caption{The confidence set $C_{144}(0.05)$ at $n=144$ as defined in \cref{thm:confidence_sequence}. The true $\btheta = (0.1,0.3,0.6)$ is marked with a black cross, and the null hypothesis $\btheta_0 =(0.1,0.4,0.5)$ is denoted with a red cross.}
\label{fig:srm_ternary}
\end{center}
\vskip -0.2in
\end{figure}

\section{Inhomogeneous Bernoulli Processes}
\label{sec:sequential_bernoulli_test}
Suppose a new experimental unit is randomly assigned to one of $d$ experiment arms at time $t$, according to assignment probabilities $\brho \in \triangle^d$, and a Bernoulli outcome is observed. The Bernoulli probability for arm $i$ at time $t$ is parameterized by $p_i(t) = e^{\mu(t)}e^{\delta_i}$ so that the time-varying effect is multiplicative and common to all arms. The improvement of arm $j$ over arm $i$ at \text{any} time is then $p_j(t)/p_i(t) = \exp(\delta_j-\delta_i)$, and the difference on the log-scale is simply $\delta_j-\delta_i$. Suppose Bernoulli failures are ignored and arms are compared only through their counts of Bernoulli successes. The (conditional) probability that the next Bernoulli success comes from arm $i$ is
\begin{equation}
    \label{eq:softmax}
    \theta_i = \frac{\rho_i e^{\delta_i}}{\sum_{j=1}^{d}\rho_j e^{\delta_j}},
\end{equation}
which is independent of the time-varying effect. The arm from which the next Bernoulli success arrives is, therefore, a $\mathrm{Multinomial}(1,\btheta)$ random variable, and the counts of Bernoulli successes for each arm are an ancillary statistic with respect to the time-varying nuisance parameter $\mu(t)$. Framing the problem this way allows the sequential multinomial test to perform inference on $\bdelta$. Simple hypotheses about $\bdelta$ can therefore be translated into testing simple hypotheses about $\btheta$. Equality among success probabilities can be tested by simply testing the null multinomial hypothesis $\btheta_0 = \brho$. The individual components $\delta_i$ are not identifiable, as adding a constant to each element results in the same $\btheta$, yet contrasts of the form $\sum_i a_i\delta_i$ for $\sum_i a_i = 0$ are identifiable.
\comment{The naive way to get a confidence interval on on $\delta_j-\delta_i$ at time $n$ is simply to re-purpose the confidence intervals available for $\theta_j$ and $\theta_i$ in \cref{eq:marginal_confidence_sequence} by recognizing $\delta_j-\delta_i = \log \theta_j - \log \theta_i - (\log \rho_j - \log \rho_i)$. A simple upper confidence bound is then provided by $\log j_{n,j}^{+}(u) - \log j_{n,i}^{-}(u) - (\log \rho_j - \log \rho_i)$ and lower bound provided by $\log j_{n,j}^{-}(u) - \log j_{n,i}^{+}(u) - (\log \rho_j - \log \rho_i)$. Tighter confidence statements can be obtained in the following manner.}

Let $\sigma_\brho : \mathbb{R}^d \rightarrow \triangle^d$ denote a generalization of the softmax function to include $\brho$, with $\sigma_\brho(\bdelta)_i$ equal to the right hand side of \cref{eq:softmax}.
\begin{theorem}
\label{thm:bernoulli_contrast_confidence_sequence}
\noindent Let $K_n(u) = \sigma_\brho^{-1}(C_n(u))$, then 
\begin{equation}
   \mathbb{P}[\bdelta \in K_n(u) \text{ for all } n\in\mathbb{N}] \geq 1-u 
\end{equation}
\end{theorem}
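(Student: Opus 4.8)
The plan is to reduce the statement to \cref{thm:confidence_sequence} via two observations: first, that the stream of arm-labels of Bernoulli successes is an i.i.d.\ $\mathrm{Multinomial}(1,\btheta)$ sequence with $\btheta = \sigma_\brho(\bdelta)$; and second, that defining $K_n(u)$ as the \emph{preimage} of $C_n(u)$ under $\sigma_\brho$ makes the event $\{\bdelta \in K_n(u)\}$ coincide, set-for-set, with the event $\{\btheta \in C_n(u)\}$. Once both are in place, the coverage guarantee transfers verbatim.

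First I would pin down the distributional claim. Conditioning on the arrival of a Bernoulli success, the parameterization $p_i(t)=e^{\mu(t)}e^{\delta_i}$ gives that this success is attributable to arm $i$ with probability $\rho_i e^{\mu(t)}e^{\delta_i}/\sum_j \rho_j e^{\mu(t)}e^{\delta_j} = \rho_i e^{\delta_i}/\sum_j \rho_j e^{\delta_j} = \sigma_\brho(\bdelta)_i$, the common factor $e^{\mu(t)}$ cancelling. Since assignment and Bernoulli draws are independent across units, the labels of successive successes are i.i.d., so the arm-label process is exactly a sequence of independent $\mathrm{Multinomial}(1,\btheta)$ vectors with $\btheta=\sigma_\brho(\bdelta)$, irrespective of the nuisance trajectory $\mu(\cdot)$. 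This is the ancillarity statement made in the surrounding text, and it is what legitimizes computing the multinomial confidence sequence $C_n(u)$ from the observed success counts $\bS_n$; in particular the law of the data under $\bdelta$ agrees with the multinomial law $\mathbb{P}_\btheta$ of \cref{thm:confidence_sequence}.

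Next I would carry out the change of variables. By the definition of a preimage, for every $n$,
\begin{equation*}
\bdelta \in K_n(u) = \sigma_\brho^{-1}(C_n(u)) \iff \sigma_\brho(\bdelta) \in C_n(u) \iff \btheta \in C_n(u),
\end{equation*}
the final equivalence using that the true probability vector is $\btheta = \sigma_\brho(\bdelta)$. As this holds for each $n$, the events $\{\bdelta \in K_n(u)\ \text{for all}\ n\}$ and $\{\btheta \in C_n(u)\ \text{for all}\ n\}$ are identical as subsets of the sample space and hence equiprobable. Applying \cref{thm:confidence_sequence} to the multinomial sequence identified above yields $\mathbb{P}[\btheta \in C_n(u)\ \text{for all}\ n]\ge 1-u$, and the conclusion follows.

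I expect the only genuine subtlety to lie in the distributional reduction of the first step---verifying that restricting attention to success labels produces an honest i.i.d.\ multinomial sequence whose law is free of $\mu(\cdot)$---rather than in the preimage manipulation, which is purely formal. Note in particular that $\sigma_\brho$ is not injective, since shifting every $\delta_i$ by a common constant leaves $\btheta$ unchanged, so $K_n(u)$ is unbounded along the all-ones direction; this does not affect the probability statement and is consistent with only the contrasts $\sum_i a_i\delta_i$ with $\sum_i a_i=0$ being identifiable.
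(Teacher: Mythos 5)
Your proof is correct and takes essentially the same route as the paper, which simply declares the result ``a direct consequence of \cref{thm:confidence_sequence}'': you make explicit the two steps that justification relies on, namely that the arm labels of successes form an i.i.d.\ $\mathrm{Multinomial}(1,\sigma_\brho(\bdelta))$ sequence (with the nuisance factor $e^{\mu(t)}$ cancelling) and that the preimage definition makes $\{\bdelta \in K_n(u)\ \forall n\}$ identical to $\{\btheta \in C_n(u)\ \forall n\}$. Your closing remark on the non-injectivity of $\sigma_\brho$ and the resulting unboundedness of $K_n(u)$ along the all-ones direction is also consistent with the paper's identifiability discussion of contrasts.
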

The proof is a direct consequence of \cref{thm:confidence_sequence}. The following corollary yields simultaneous confidence sequences for all contrasts.
\begin{corollary}
\label{cor:bernoulli_contrast_marginal_ci}
\noindent Let $K_n(u) = \sigma_\brho^{-1}(C_n(u))$ and $\mathcal{A}^d = \lbrace \ba \in \mathbb{R}^d : \sum_i a_i = 0 \rbrace$ denote the set of all $d$-dimensional contrasts. For all $\ba \in \mathcal{A}^d$ define
\begin{align*}
l_{n,\ba}^+(u) &= \sup \{ \sum_i a_i\delta_i : \bdelta \in K_n(u)\},\\
l_{n,\ba}^-(u) &= \inf \{\sum_i a_i \delta_i : \bdelta \in K_n(u)\},
\end{align*}
then 
\begin{equation*}
  \label{eq:bernoulli_marginal_confidence_sequence}
  \mathbb{P}_{\btheta}\left(
    \forall \ba \in \mathcal{A}^d: \sum_i a_i \delta_i \in \bigcap_{n=1}^{\infty} [l_{n,a}^-(u),l_{n,a}^+(u)]
  \right) \geq 1- u.
\end{equation*} 
\end{corollary}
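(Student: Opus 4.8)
The plan is to deduce this corollary directly from \cref{thm:bernoulli_contrast_confidence_sequence} by a projection argument, exactly analogous to how \cref{cor:marginal_ci} follows from \cref{thm:confidence_sequence}. The essential observation is that simultaneity across the \emph{uncountable} family $\mathcal{A}^d$ of contrasts requires no union bound: every interval $[l_{n,\ba}^-(u), l_{n,\ba}^+(u)]$ is extracted from the single random set $K_n(u)$, so a single time-uniform coverage event controls all of them at once.

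Concretely, let $A = \{\bdelta \in K_n(u) \text{ for all } n \in \mathbb{N}\}$ be the event whose probability \cref{thm:bernoulli_contrast_confidence_sequence} bounds below by $1-u$, and let $B$ denote the event appearing in the statement. I would first establish the deterministic inclusion $A \subseteq B$. Fix a realization in $A$ and fix any $\ba \in \mathcal{A}^d$ and any $n$. Since the true parameter $\bdelta$ lies in $K_n(u)$, it is an admissible point in the programs defining $l_{n,\ba}^\pm(u)$, so by definition of the supremum and infimum,
\begin{equation*}
l_{n,\ba}^-(u) = \inf_{\bdelta' \in K_n(u)} \textstyle\sum_i a_i \delta_i' \;\leq\; \sum_i a_i \delta_i \;\leq\; \sup_{\bdelta' \in K_n(u)} \sum_i a_i \delta_i' = l_{n,\ba}^+(u).
\end{equation*}
As this holds for every $n$, the value $\sum_i a_i \delta_i$ lies in $\bigcap_{n} [l_{n,\ba}^-(u), l_{n,\ba}^+(u)]$, and since $\ba$ was arbitrary the realization lies in $B$. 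Hence $A \subseteq B$, and monotonicity of probability gives $\mathbb{P}_{\btheta}(B) \geq \mathbb{P}_{\btheta}(A) \geq 1-u$, which is the claim.

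The argument is routine, so I do not anticipate a genuine obstacle; the only points deserving care are \emph{scope} and \emph{finiteness}. Because $\sigma_\brho(\bdelta + c\mathbf{1}) = \sigma_\brho(\bdelta)$ for every scalar $c$, the preimage $K_n(u) = \sigma_\brho^{-1}(C_n(u))$ is invariant under translation along $\mathbf{1}$; consequently the individual coordinates $\delta_i$ are unbounded over $K_n(u)$ and their marginal sup/inf would be $\pm\infty$. Restricting attention to $\ba \in \mathcal{A}^d$ with $\sum_i a_i = 0$ is precisely what removes this degeneracy: the functional $\bdelta' \mapsto \sum_i a_i \delta_i'$ is constant along $\mathbf{1}$ and, since $C_n(u) \subseteq \triangle^d$ is bounded, its supremum and infimum over $K_n(u)$ are finite, making the intervals informative rather than vacuous. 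I would close by remarking that this is the identical projection step used for \cref{cor:marginal_ci}, now applied after the change of variables $\sigma_\brho$, and that no measurability difficulty arises because the inequality only uses the inclusion $A \subseteq B$.
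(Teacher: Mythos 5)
Your proof is correct and matches the paper's (implicit) argument exactly: the paper treats the corollary as following from \cref{thm:bernoulli_contrast_confidence_sequence} by the same projection step used to derive \cref{cor:marginal_ci} from \cref{thm:confidence_sequence}, which is precisely your event-inclusion argument $A \subseteq B$. Your added observations on translation invariance along $\mathbf{1}$ and why restricting to contrasts with $\sum_i a_i = 0$ keeps the bounds finite are sound and go slightly beyond what the paper spells out, but they do not change the route.
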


The upper bound $l_{n,a}^+(u)$ is the solution to the following convex optimization
\begin{equation}
\label{eq:bernoulli_marginal_ci_optimization}
    \begin{split}
    \text{max} \quad &\sum_i a_i \delta_i\\
        \text{s.t.} \quad & c\leq \sum_{i}S^n_i\left( \delta_i+  \log \rho_i - \log \sum_j \rho_j e^{\delta_j} \right) \\
    \end{split}
\end{equation}
where $c = \log\Beta(\balpha_0 + \bS_n) - \log\Beta(\balpha_0)+\log u $. Convexity follows from the log-sum-exponential function. The lower bound $l_{n,a}^-(u)$ is the solution to the corresponding minimization problem. This is visualized in \cref{fig:bernoulli_plane}.

A hypothesis can be rejected at the $u$ level as soon as the set that it defines fails to intersect with the confidence set $K_n(u)$.
Note that $K_n(u_1) \subset K_n(u_2)$ for $u_1 > u_2$. To obtain a sequential $p$-value, we seek the largest $u$ such that the null is not rejected. That is, we seek the smallest set $K_n(u)$ such that there is a non-empty intersection with the subset of $\mathbb{R}^d$ defined by the hypothesis. 
This too can be achieved by a convex optimization program.
One can simply maximize $u$ over the feasible set defined by the intersection of the $K_n(u)$ and the set defined by the hypothesis. Suppose one wishes to test the hypothesis $\delta_0 \geq \delta_1$ and $\delta_0 \geq \delta_2$. The sequential $p$-value at time $n$ is the inverse of the solution to the following convex program
\begin{equation}
\label{eq:bernoulli_pvalue_optimization}
    \begin{split}
    \text{max} \quad  q & \\
        \text{s.t.} \quad  c\leq & \log(q)+ \sum_{i}S^n_i\left( \delta_i+  \log \rho_i - \log \sum_j \rho_j e^{\delta_j} \right) \\
         \delta_0 \geq& \delta_1\\
         \delta_0 \geq& \delta_2\\
    \end{split}
\end{equation}
where $c = \log\Beta(\balpha_0 + \bS_n) - \log\Beta(\balpha_0)$.

\subsection{Simulation Example}
\begin{figure}[ht]
\vskip 0.2in
\begin{center}
\centerline{\includegraphics[width=\columnwidth]{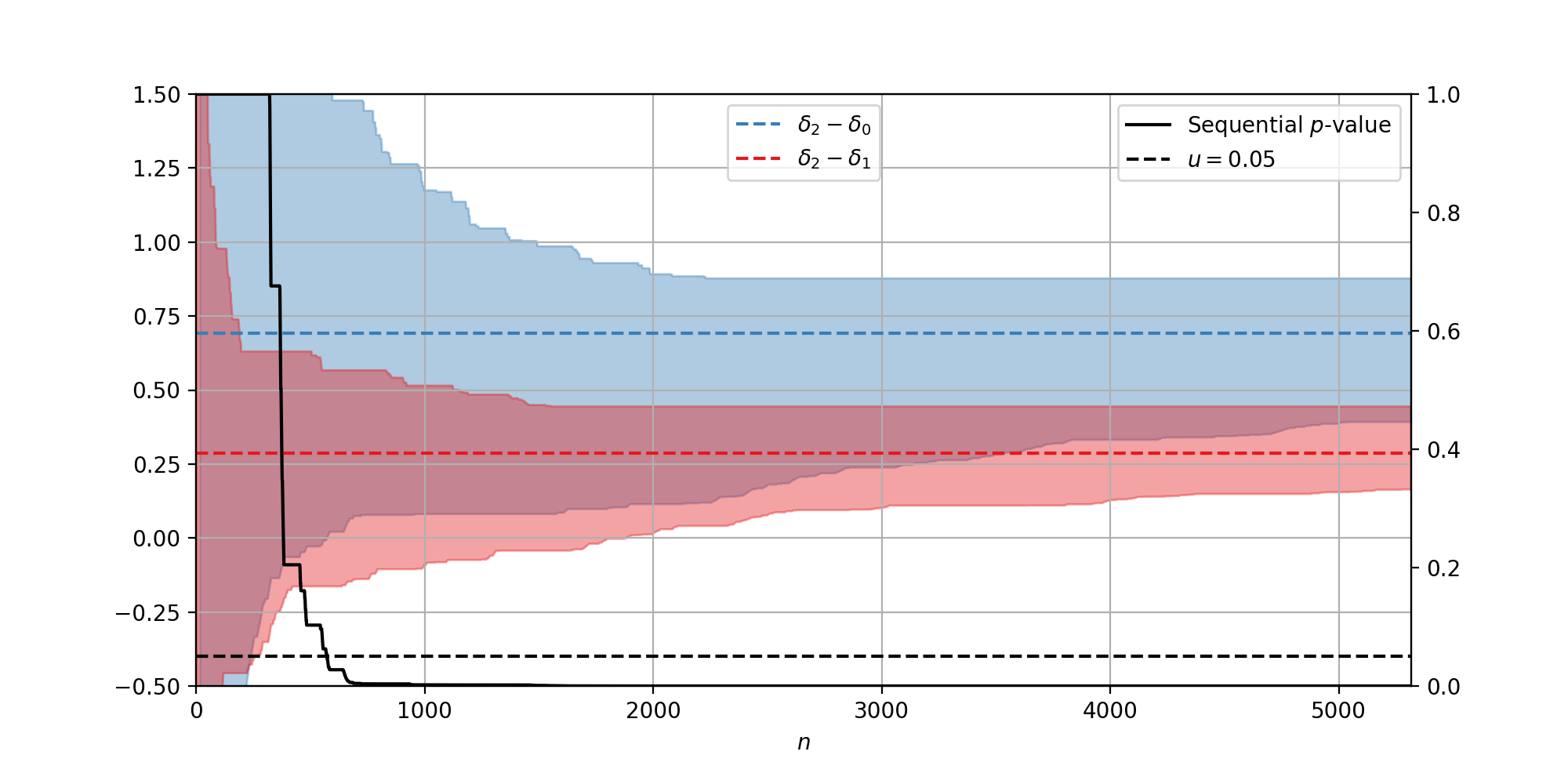}}
\caption{(Left axis) 0.95 Simultaneous confidence sequences for $\delta_2 - \delta_1 = \log(0.4)-\log(0.3) \approx 0.29$ and $\delta_2-\delta_0 = \log(0.4) - \log(0.2)\approx 0.69$ provided by \cref{cor:bernoulli_contrast_marginal_ci} and obtained via the solution to \cref{eq:bernoulli_marginal_ci_optimization}. The confidence sequences for $\delta_2-\delta_0$ and $\delta_2 - \delta_1$ are completely positive for $n\geq 573$ and $n\geq 1882$ respectively, after which one can conclude with probability $1-u$ that arm $2$ is optimal. (Right axis) Sequential $p$-value for testing the null hypothesis $\delta_0 \geq \delta_1$ and $\delta_0 \geq \delta_2$ obtained by solving \cref{eq:bernoulli_pvalue_optimization}. The $p$-value is less than critical value $u=0.05$ for all $n\geq573$.}
\label{fig:bernoulli_contrast_marginal_Ci}
\end{center}
\vskip -0.2in
\end{figure}
\begin{figure}[ht]
\vskip 0.2in
\begin{center}
\centerline{\includegraphics[width=\columnwidth]{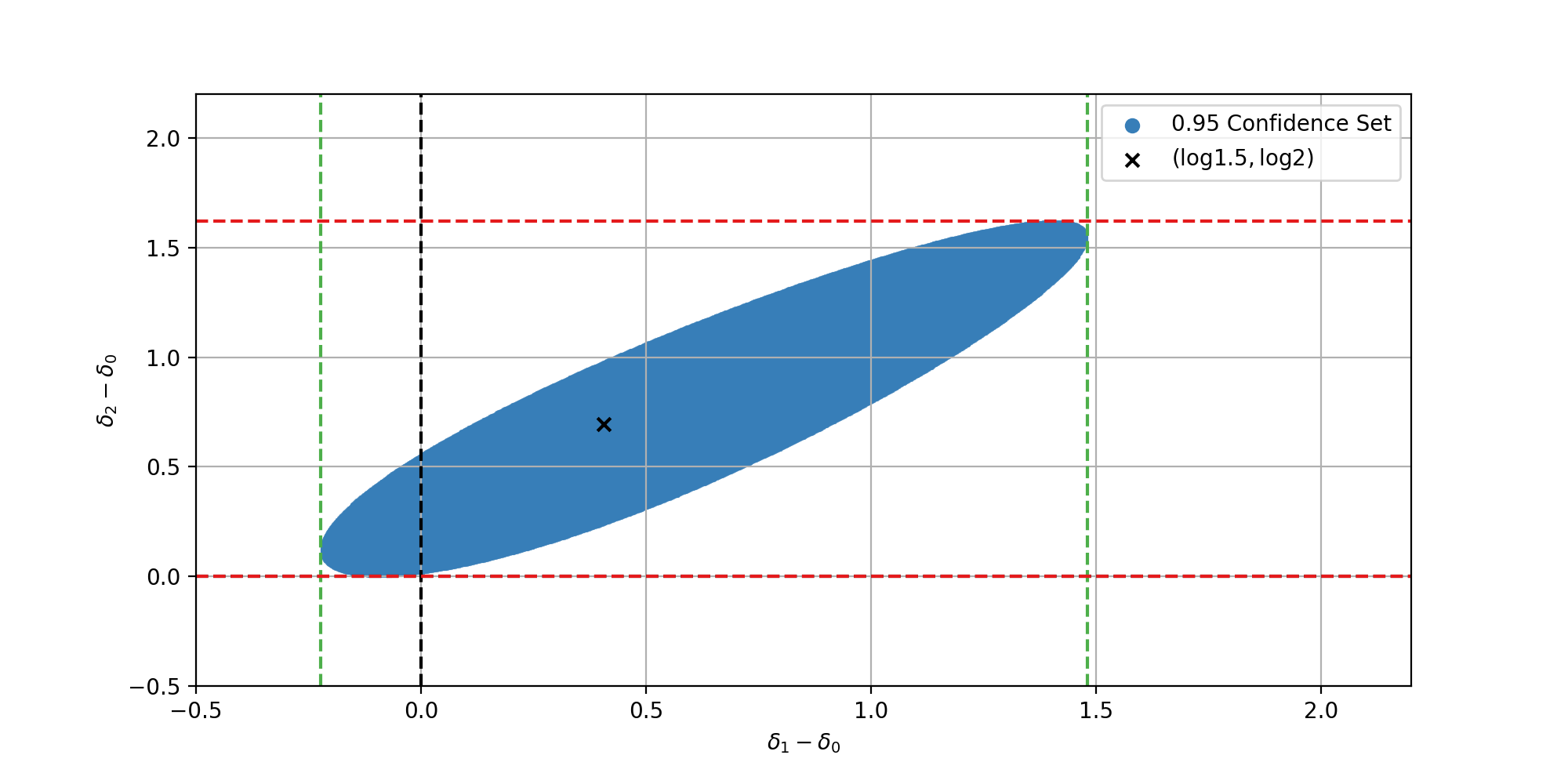}}
\caption{0.95 joint confidence set for $\delta_2-\delta_0$ and $\delta_1 - \delta_0$ at $n = 573$. (Black cross) True parameter values $(\log 3/2,\log 2)$. (Red dashed) $l_{573,\bb}^-$ and $l_{573,\bb}^+$, (Green dashed) $l_{573,\bc}^-$ and $l_{573,\bc}^+$ as in \cref{eq:bernoulli_marginal_confidence_sequence} with $\bb = (-1, 0, 1)$ and $\bc = (0, -1, 1)$.}
\label{fig:bernoulli_plane}
\end{center}
\vskip -0.2in
\end{figure}
Consider the following example. An experimental unit $i$ is randomly assigned to one of $3$ arms with probabilities $\brho = (0.1, 0.3, 0.6)$. Let $g(i)$ map the unit to the arm index to which it is assigned. A Bernoulli success is observed for unit $i$ with probability $e^{\mu(i)}e^{\delta_{g(i)}}$ with $\bdelta = (\log 0.2, \log 0.3, \log 0.4 )$ and $\mu(i)=\frac{1}{2}\sin(\frac{7\pi i}{n}) + \frac{1}{2}$. 
Confidence sequences for contrasts $\delta_2-\delta_1$ and $\delta_2-\delta_0$, obtained through \cref{eq:bernoulli_marginal_ci_optimization}, are shown in \cref{fig:bernoulli_contrast_marginal_Ci}. The sequential p-value for testing the hypothesis $\delta_0 - \delta_1 \geq 0$ and $\delta_0 - \delta_2 \geq 0$, obtained through \cref{eq:bernoulli_pvalue_optimization},  are shown using the right axis of \cref{fig:bernoulli_contrast_marginal_Ci}. The p-value is less than $u=0.05$ for all $n \geq 573$. This is the smallest $n$ for which the joint confidence set over these contrasts fails to intersect with the set defined by the hypothesis (the lower left quadrant) as shown in \cref{fig:bernoulli_plane}.
\comment{
\begin{figure}[ht]
\vskip 0.2in
\begin{center}
\centerline{\includegraphics[width=\columnwidth]{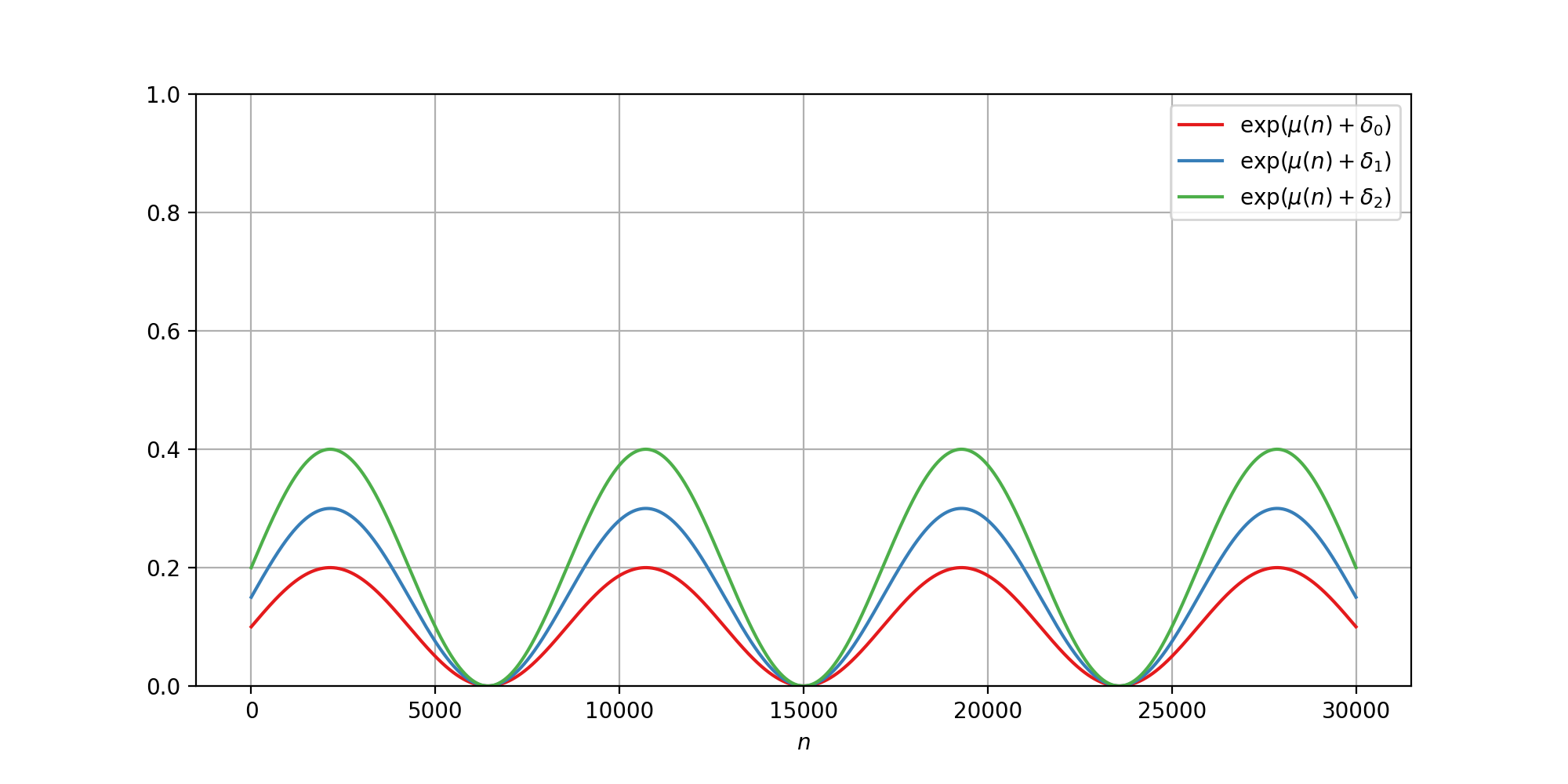}}
\caption{Bernoulli success probabilities for the $n$'th experimental unit for arms 0, 1 and 2.}
\label{fig:bernoulli_probabilities}
\end{center}
\vskip -0.2in
\end{figure}
}

\subsection{Case Study}
The following case study is taken from an experiment comparing two versions of a signup funnel at a leading internet streaming company, from whom we have obtained consent to use in this paper. An A/B test was created to estimate the success probability of each signup funnel version. To start, let's assume that the success probabilities are constant and let's try to estimate them using the confidence sequences of \cref{cor:marginal_ci}. Figure \ref{fig:bernoulli_stationary} shows the application of the multinomial confidence sequences to estimating the conversion probabilities for arms 0 and 1. Note that the running intersection of anytime-valid confidence intervals becomes the empty set, and the MLE exits the confidence sequence. This would a rare event (with probability less than $\alpha$) if the assumptions of constant success probabilities were true. Instead, it indicates that the conversion probabilities are not constant, but time-varying, invalidating a commonly made assumption in conversion rate experimentation.

\begin{figure}[ht]
\vskip 0.2in
\begin{center}
\centerline{\includegraphics[width=\columnwidth]{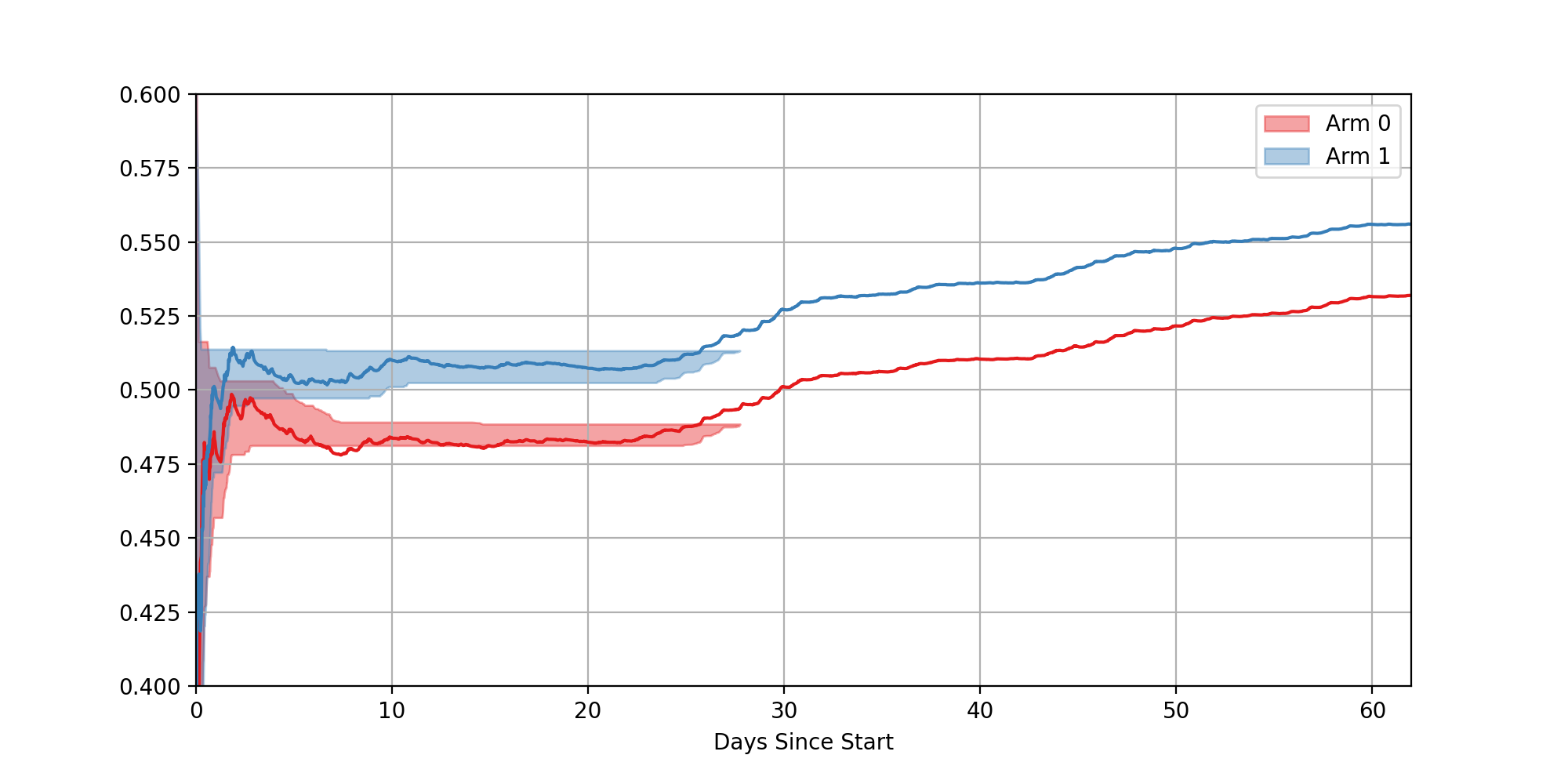}}
\caption{ Assuming $p_1$ and $p_0$ are stationary and estimating them using multinomial confidence sequences from \cref{cor:marginal_ci}.}
\label{fig:bernoulli_stationary}
\end{center}
\vskip -0.2in
\end{figure}

\begin{figure}[ht]
\vskip 0.2in
\begin{center}
\centerline{\includegraphics[width=\columnwidth]{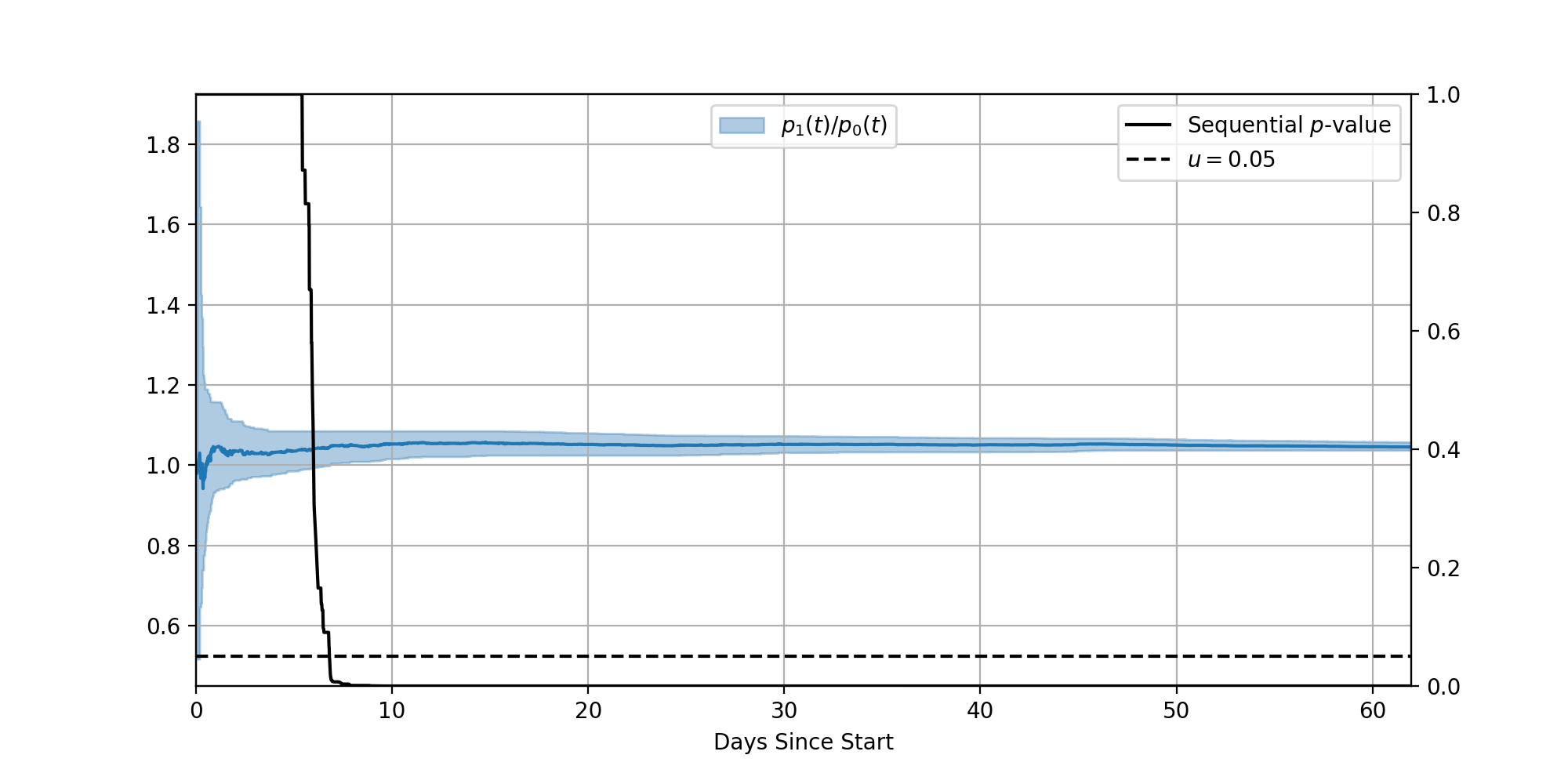}}
\caption{Assuming $p_1(t)$ and $p_0(t)$ are dynamic but with $p_1(t)/p_0(t)=\exp (\delta_1 - \delta_0)$ and estimating this quotient using the confidence sequences from equation \cref{cor:bernoulli_contrast_marginal_ci}.
    (Left axis) Confidence sequences are visualized with shaded regions and MLE estimates are visualized with solid lines. (Right axis) sequential $p$-value.}
\label{fig:bernoulli_nonstationary}
\end{center}
\vskip -0.2in
\end{figure}
Now, let's relax the assumption of constant success probabilities to constant $p_1(t)/p_0(t)$ and estimate this quotient using the confidence sequences of \cref{cor:bernoulli_contrast_marginal_ci}.
The confidence sequence on the multiplicative constant is shown in \ref{fig:bernoulli_nonstationary}.
Unlike before, the confidence sequence behaves as expected and no evidence is provided against the assumption that this quotient is stationary.
A winning arm can be declared in approximately 1 week instead of 9, toward which all subsequent visitors can be directed, dramatically increasing sign-ups relative to the fixed-$n$ experiment.

\section{Inhomogeneous Poisson Counting Processes}
A counting process is a stochastic process $\left\{ N(t): t\geq 0 \right\}$ satisfying $N(0)=0$, $N(t)\in\mathbb{N}_0$ and $N(s) \leq N(t)$ for $s \leq t$. The inhomogeneous Poisson counting process is defined by an \textit{intensity function} $\lambda: \mathbb{R} \rightarrow \mathbb{R}^{+}$ that is locally integrable, $\int_B \lambda(t) dt \leq \infty$ for all bounded Borel measurable sets $B \in \mathbb{R}$, defining a measure $\Lambda(B) = \int_B \lambda(t) dt$ \cite{kingman}. For any collection of disjoint Borel measurable sets $B_1, B_2, \dots$ the inhomogeneous Poisson counting process has the property that $N(B_i)$ are independent $\mathrm{Poisson}(\Lambda(B_i))$ random variables. The inhomogeneous Poisson counting process can be defined in terms of an inhomogeneous Poisson point process on the real line by simply counting the number of points in a set. For our applications these points correspond to times of events. At any time $t$ the probability density of the time-difference $s$ to the next point is given by $g(s) = \lambda(t+s)\exp(-\int_0^{s} \lambda(t+s) ds)$.
The independent increments property implies a \textit{memoryless} property of the process, that the counts in the next time interval or waiting time until the next point are independent of the process history. This is a necessary property to establish the following theorem.

\begin{theorem}
Consider $d$ inhomogeneous Poisson point processes with intensity functions $\lambda_i(t) = \rho_i e^{\delta_i}\lambda(t)$ for $i \in \lbrace 1, 2, \dots, d \rbrace$. Let each point produced by process $i$ be marked with the corresponding process index $i$. At any time $t$, such as immediately after the previous point, the probability that the next point has mark $i$ is given by 
\label{thm:poissonmultinomial}
\begin{equation}
    \label{eq:softmax2}
    \theta_i = \frac{\rho_i e^{\delta_i}}{\sum_{j=1}^{d}\rho_j e^{\delta_j}}.
\end{equation}
\end{theorem}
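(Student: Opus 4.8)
The plan is to compute $\theta_i$ directly as a competing-risks probability among the $d$ processes, using the waiting-time density quoted immediately before the statement together with the independent-increments (memoryless) property. I will assume the $d$ processes are mutually independent, so that at a fixed reference time $t$ the waiting times $S_i$ to the next point of each process $i$ are independent, with survival functions fixed by the Poisson increment property,
\[
\mathbb{P}(S_i > s) = \exp\left(-\int_0^s \lambda_i(t+u)\,du\right),
\]
and densities $g_i(s) = \lambda_i(t+s)\exp(-\int_0^s \lambda_i(t+u)\,du)$. The next point of the superposition is produced by whichever process attains the minimum waiting time, and since the $S_i$ have continuous distributions ties occur with probability zero, so the mark of the next point is the index attaining that minimum.

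Consequently I would write
\[
\theta_i = \mathbb{P}(S_i < S_j \text{ for all } j\neq i) = \int_0^\infty g_i(s)\prod_{j\neq i}\mathbb{P}(S_j>s)\,ds,
\]
substitute the survival functions, and combine the exponentials into the single factor $\exp(-\int_0^s \sum_j \lambda_j(t+u)\,du)$. Plugging in $\lambda_j(t+u) = \rho_j e^{\delta_j}\lambda(t+u)$ lets me pull the constant $c \df \sum_j \rho_j e^{\delta_j}$ out of the sum. The decisive step is then a change of variables $v = \int_0^s \lambda(t+u)\,du$, $dv = \lambda(t+s)\,ds$, which collapses the time-inhomogeneity and yields
\[
\theta_i = \rho_i e^{\delta_i}\int_0^\infty e^{-c v}\,dv = \frac{\rho_i e^{\delta_i}}{c}.
\]
Crucially, the common factor $\lambda(t+s)$ carrying all the time dependence is exactly the Jacobian of this substitution, so it disappears and $\theta_i$ becomes independent of $t$ and of the nuisance function $\lambda$, which is the whole point of the construction.

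The main obstacle is a technical integrability point: the substitution sends the upper limit to $\int_t^\infty \lambda(u)\,du$, which the assumed local integrability of $\lambda$ does not force to be infinite. If this integral equals some finite $L$, there is positive probability $e^{-cL}$ that no further point ever arrives; however, restricting to the event that a next point does occur rescales both the numerator $\rho_i e^{\delta_i}(1-e^{-cL})/c$ and the normalisation $1-e^{-cL}$ by the same factor, so the conditional mark probability is still $\rho_i e^{\delta_i}/c$. I would therefore either assume $\int_t^\infty\lambda(u)\,du = \infty$ for a clean statement or carry this conditioning through explicitly. As a cross-check, the identity also follows from the coloring/thinning theorem for Poisson processes: the superposition is Poisson with intensity $c\,\lambda(t)$, and each point independently receives mark $i$ with probability $\lambda_i(t)/(c\,\lambda(t)) = \rho_i e^{\delta_i}/c$, which is manifestly constant in $t$.
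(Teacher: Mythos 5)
Your proof is correct, and it is organized differently from the paper's. The paper proceeds in three steps: \cref{lem:next_time} establishes the memoryless waiting-time density you quote; \cref{lem:whichprocess} computes the two-process race probability $\mathbb{P}[\tau_1 < \tau_0]$ by an explicit double integral over both waiting times; and the theorem then follows by superposing all processes $j \neq i$ into a single ``not $i$'' process and invoking the two-process lemma. You instead evaluate the $d$-way competing-risks integral $\int_0^\infty g_i(s)\prod_{j\neq i}\mathbb{P}(S_j>s)\,ds$ in one shot, using the same decisive time change $v=\int_0^s\lambda(t+u)\,du$ that cancels the nuisance intensity; superposition appears in your write-up only as the thinning/coloring cross-check, which is in fact a complete and arguably cleaner alternative proof (the merged process is Poisson with intensity $c\,\lambda(t)$ where $c=\sum_j \rho_j e^{\delta_j}$, and each point independently receives mark $i$ with probability $\lambda_i(t)/(c\,\lambda(t))$, manifestly constant in $t$). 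The trade-offs are minor: the paper's reduction keeps every computation at most two-dimensional and reuses a textbook property it already cites, while your direct integral avoids the reduction step and handles all $d$ processes symmetrically. More substantively, you patch a gap the paper leaves open: the evaluation in \cref{lem:whichprocess} of $\int_T^\infty(e^{\delta_0}+e^{\delta_1})\lambda(\tau_0)\exp\left(-(e^{\delta_0}+e^{\delta_1})\int_T^{\tau_0}\lambda(s)\,ds\right)d\tau_0$ as equal to $1$ tacitly assumes $\int_T^\infty\lambda(s)\,ds=\infty$, which the stated local integrability of $\lambda$ does not guarantee; your observation that when this tail integral equals a finite $L$ both the unnormalized mark probability $\rho_i e^{\delta_i}(1-e^{-cL})/c$ and the probability $1-e^{-cL}$ that a next point exists carry the same factor is exactly the conditioning needed to make ``the probability that the next point has mark $i$'' well-defined in that regime. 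You are also right to state mutual independence of the $d$ processes explicitly; the paper uses it silently in the product-form joint density of $(\tau_0,\tau_1)$.
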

This gives the probability that the next point in time is from process $i$. The proof is given in the \cref{app:poisson}. \cref{thm:poissonmultinomial} states that the sequence of marks can be considered a sequence of $\mathrm{Multinomial}(1,\btheta)$ random variables, allowing the sequential multinomial test to perform inference on $\bdelta$. For example, a sequential test of equality among $d$ time-inhomogeneous Poisson point processes ($\lambda_i(t) = \lambda_j(t)$ for all pairs $i$ and $j$) can be obtained from the sequential multinomial test of the hypothesis $\btheta_0 = (1/d, \dots, 1/d)$. Once again, the total counts in each arm is a statistic that is ancillary to the time-varying nuisance parameter $\lambda(t)$.

\subsection{Simulation Example}
Consider the following example with only two arms, such as a canary test designed to test if a new software version produces more errors. Units are assigned to arms with probability $\brho = (0.8, 0.2)$.  $\lambda_1(t)$ can be expressed in terms of $\lambda_0(t)$ as $\lambda_1(t) = \rho e^{\delta}\lambda_0(t)$ with $\rho=\frac{\rho_1}{\rho_0}$ and $e^{\delta}=e^{\delta_1-\delta_0}$. Let $\delta = 1.5$ and $\lambda_0(t) = 2000\mathrm{sigmoid}(\sin(10\pi t) +8t - 4)$. Point process realizations are obtained by thinning a homogeneous Poisson point process with rate 2000 \cite{thinning}. \cref{fig:intensities} shows the point process realizations, intensities, and counting processes for each arm. \cref{fig:poisson_ci} shows the confidence sequence for $\delta$ and the sequential $p$-value for testing equality $\delta = 0$ ($\btheta = \brho$).

\begin{figure}[ht]
\vskip 0.2in
\begin{center}
\centerline{\includegraphics[width=\columnwidth]{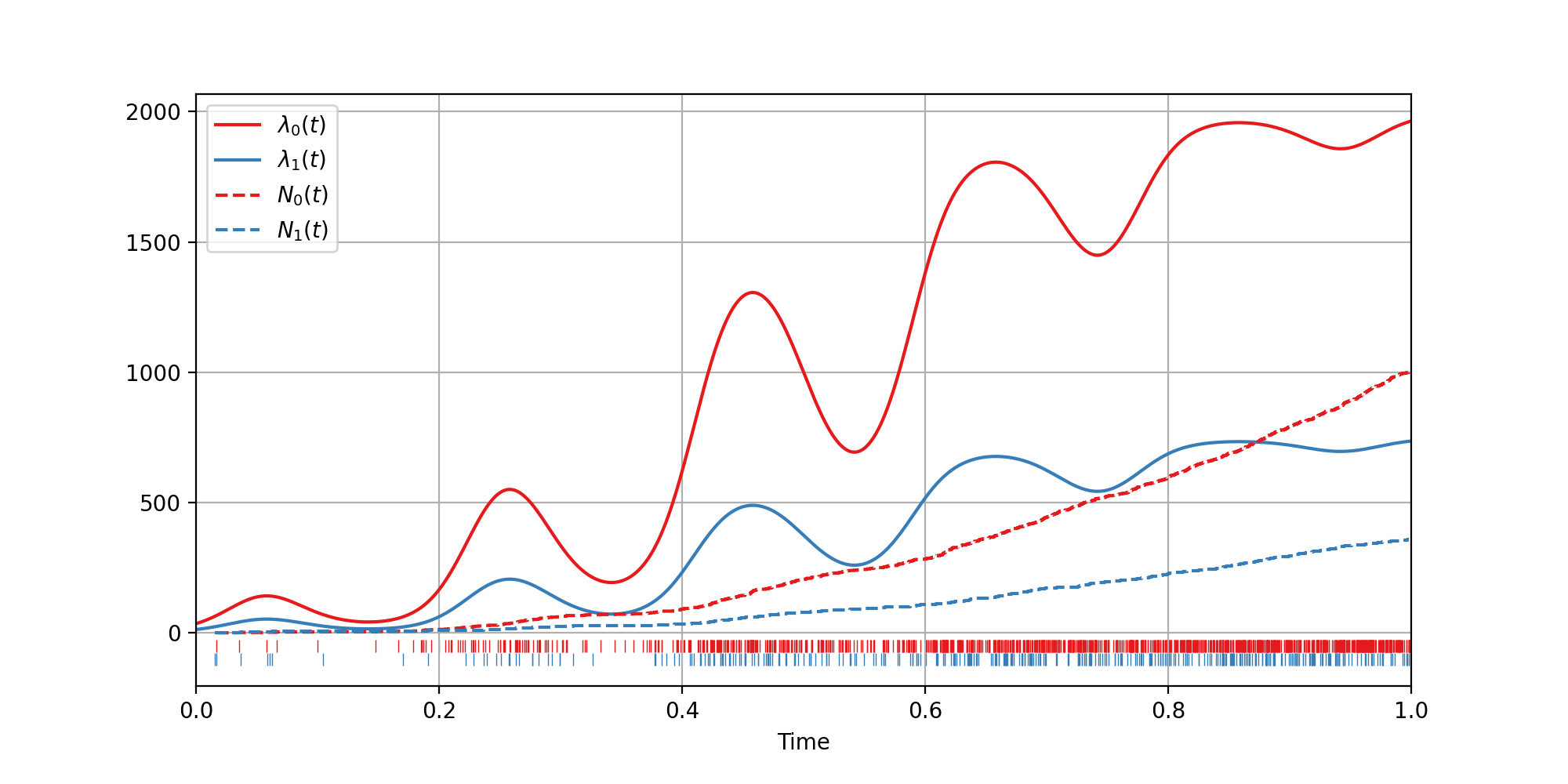}}
\caption{Inhomogeneous Poisson point process intensities $\lambda_0(t) = 2000\mathrm{sigmoid}(\sin(10\pi t) +8t - 4)$ and $\lambda_1(t) = \frac{1}{4}e^{\frac{3}{2}}\lambda_0(t)$. Associated counting processes $N_0(t)$ and $N_1(t)$. Point process realizations (rug-plots).}
\label{fig:intensities}
\end{center}
\vskip -0.2in
\end{figure}

\begin{figure}[ht]
\vskip 0.2in
\begin{center}
\centerline{\includegraphics[width=\columnwidth]{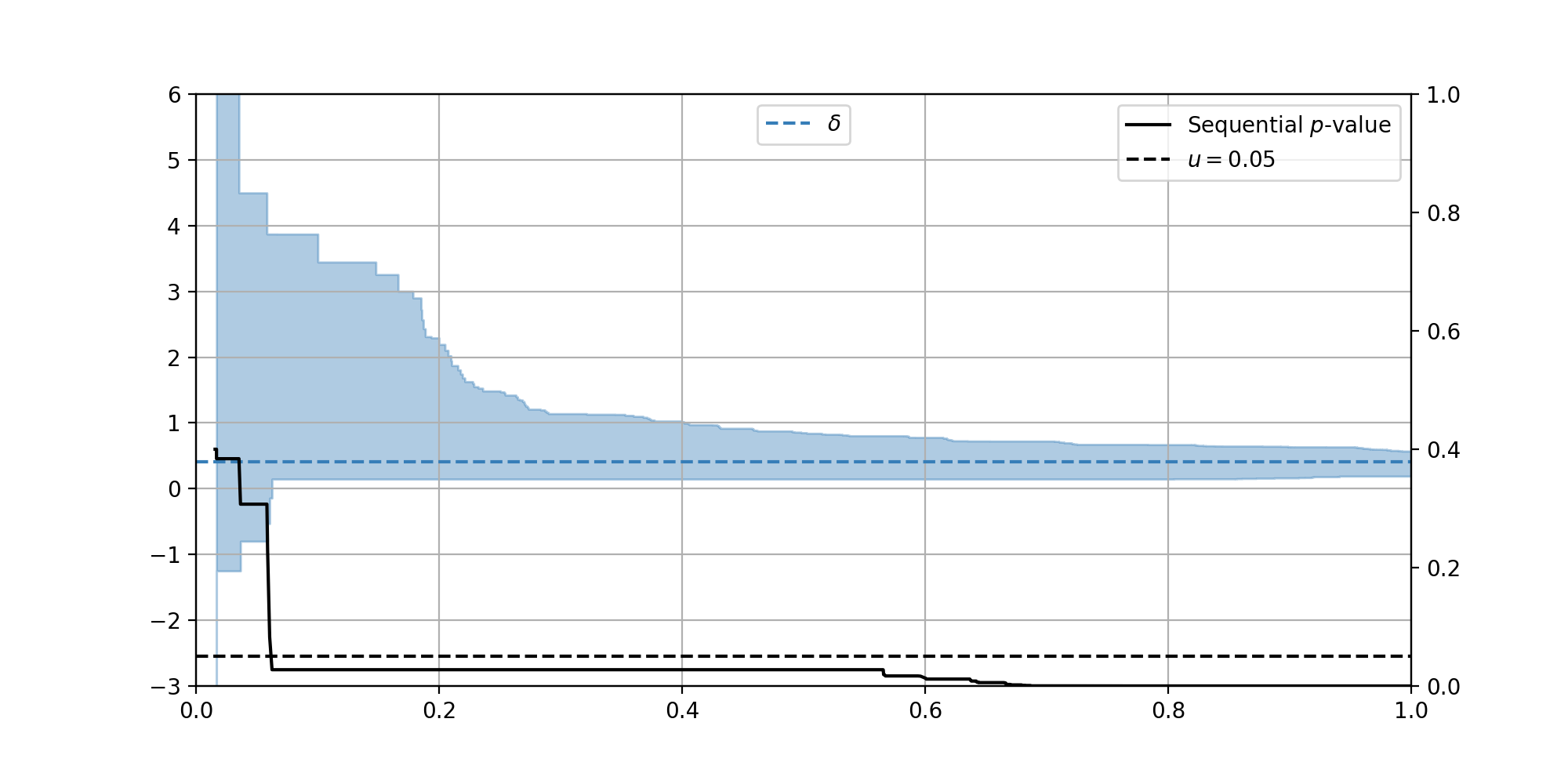}}
\caption{(Left axis) 0.95 \textit{continuous-time} Confidence sequence for $\delta = 1.5$. (Right axis) Sequential $p$-value for testing equality i.e.\ $\delta = 0 \Rightarrow \btheta = \brho$.}
\label{fig:poisson_ci}
\end{center}
\vskip -0.2in
\end{figure}

\subsection{Case Study}
The following example is taken from a leading internet streaming company, from whom we have obtained consent to user in this paper, in which a bug that affected approximately 60\% of all devices was detected with this methodology in less than 1 second. In this canary experiment, \textit{successful play starts} (SPS) are carefully monitored between the existing software and the release candidate. Whenever a title is requested by the user in the streaming application and the title successfully begins playback, then an SPS event is sent by the application to the central logging system. On the receiving end, SPS events are being received from treatment and control arms of the experiment. If significantly fewer SPS events are being received from the treatment group running the release candidate, then it indicates there is an issue with the new software version that is preventing some streams from starting. Figure \ref{fig:sps_poisson_intensity} shows that the confidence sequence on $\lambda_1(t)/\lambda_0(t)$ falls below 1.0 in less than a second, indicating that the instantaneous rate of SPS events for arm 1 is less than the instantaneous rate of SPS events for arm 0. In this case, the canary experiment was aborted and the offending bug was identified, preventing a serious bug from being released globally to all users.
\begin{figure}[ht]
\vskip 0.2in
\begin{center}
\centerline{\includegraphics[width=\columnwidth]{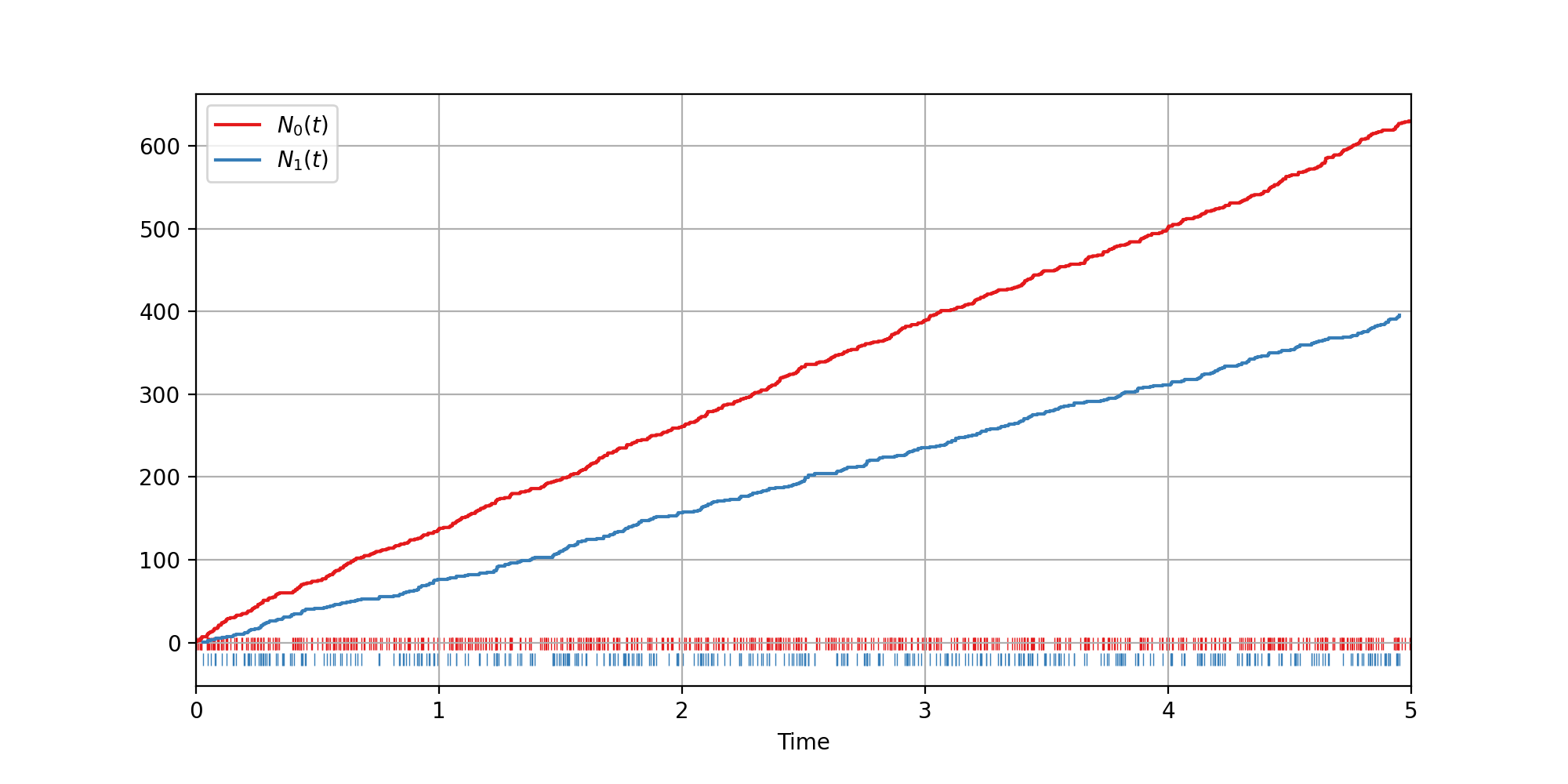}}
\caption{Ruglplot showing the timestamps of SPS events being received, while solid lines show the counting processes for arms 0 and 1}
\label{fig:sps_poisson_intensity}
\end{center}
\vskip -0.2in
\end{figure}
\begin{figure}[ht]
\vskip 0.2in
\begin{center}
\centerline{\includegraphics[width=\columnwidth]{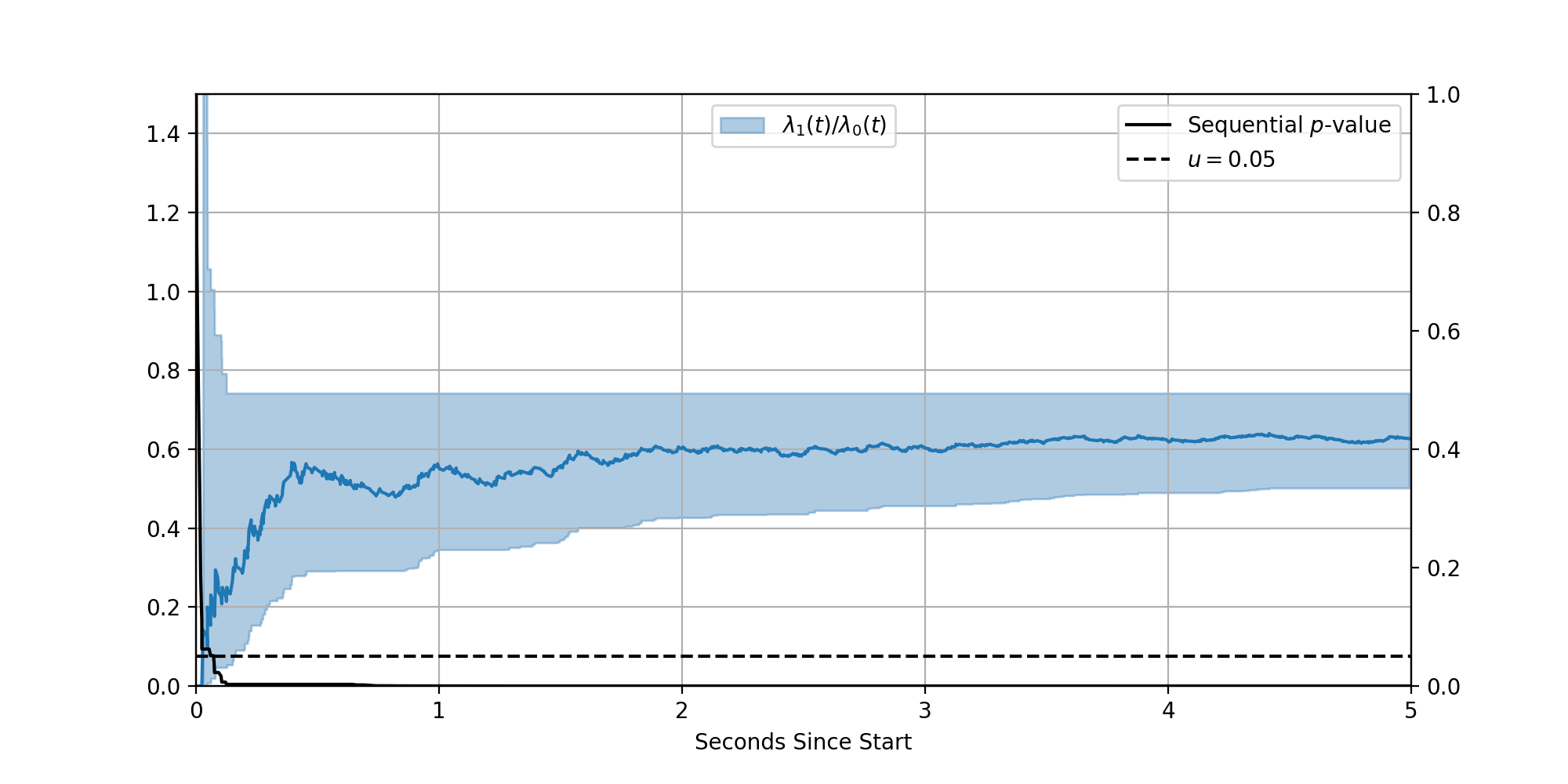}}
\caption{Shaded region shows the confidence sequence for the quotient of inhomogeneous Poisson process intensity functions $\lambda_1(t)/\lambda_0(t)$ obtained from \cref{cor:bernoulli_contrast_marginal_ci}. Blue solid line shows the MLE of $\lambda_1(t)/\lambda_0(t) = e^{\delta_1 - \delta_0}$. (Right axis) Black solid line shows sequential $p$-value.}
\label{fig:sps_poisson_intensity}
\end{center}
\vskip -0.2in
\end{figure}

\section{Conclusion}
The contributions of this paper provide an ``anytime-valid'' approach to inference in a number of important applications dealing with count data. This permits experiments to be continuously monitored and enables optional stopping, which can greatly reduce both the time required for experiments to complete and the potential harm to units in the experiment.
We first introduced a sequential test for Multinomial hypotheses using a mixture martingale construction, which has already proven to be an effective solution for rapidly detecting sample ratio mismatches in online controlled experiments. We then used this result to develop a sequential test of equality and contrasts in inhomogeneous Bernoulli processes, which has practically demonstrated dramatic speedups in decision making in conversion experiments. Contrary to many widely used models, our approach does not assume that the conversion probabilities are constant, which is often violated in the real world. Moreover, only successful conversions, such as signups, are observed in practice. This is different from Bernoulli outcome models in which it is assumed that both successes and failures are observed. Our proposed sequential test enjoys the added convenience of only requiring successful Bernoulli outcomes to be observed. Lastly, we used the sequential Multinomial test to develop a sequential test for equality and contrasts in time-inhomogeneous Poisson counting processes. These play an important role in the monitoring of systems, such as data pipelines and software usage.

The confidence sequences provided in this paper allow inference to be made at any time and the ability to use data-dependent stopping rules for ending an experiment. This in itself can dramatically speed up the time to reach conclusions with valid statistical guarantees. An obvious extension to this work is to combine these confidence sequences with a strategy for adapting the assignment probabilities, assigning fewer units to suboptimal arms, and more to the optimal arm. The confidence sequences presented here can be used to construct an adaptive algorithm that identifies the best arm with tunably high probability through the least upper confidence bound (LUCB) algorithm, as has been done successfully in \cite{lucb}. 

\comment{
This paper has presented confidence sequences for the probability parameter of a multinomial distribution and used it to develop confidence sequences on constant contrasts between time-inhomogeneous Bernoulli and Poisson point processes. These confidence sequences can be used to test flexible hypotheses about these parameters, and convex optimization programs have been given where appropriate. Each of these tests has been demonstrated with an application from the online experimentation space: testing for sample ratio mismatches, conversion rate optimization, and software canary testing. The latter two tests have the advantage of being correct in the presence of external time-varying effects that are common to all arms of the experiment. Unlike pairwise comparison tests that require corrections for testing multiple contrasts between arms, the proposed tests do not, as simultaneous confidence sequences for all possible contrasts follow from the confidence sequence on the parameter vector for all arms. The sequential or ``anytime-valid" nature of the tests allows them to be continuously monitored. This allows algorithms to implement sequentially correct stopping rules and helps automate many experimentation processes. This helps to remove the need for human supervision and scale the number of experiments that can be performed, increasing the pace of innovation.}

\bibliography{example_paper}
\bibliographystyle{icml2022}

\newpage
\appendix
\onecolumn
\section{Appendix}

\subsection{Limitations of Fixed-$n$ Testing}
To quote \citet{armitage2},
``The classical theory of experimental design deals predominantly with experiments of
predetermined size, presumably because the pioneers of the subject, particularly R.
A. Fisher, worked in agricultural research, where the outcome of a field trial is not
available until long after the experiment has been designed and started.''.
The author points out that many popular statistical tests are of the \textit{fixed-$n$} or \textit{fixed-horizon} kind, which operate just once on a complete dataset in a Neyman-Pearson type testing framework \cite{neymanpearson}. The early development of these tests was driven by the wide variety of applications in which all observations arrive at the same time or when the experimenter is simply handed a complete dataset \cite{robbins1}. Following this mode of one-time statistical analysis, these tests have been specifically optimized to maximize power at analysis time subject to a Type I error configuration. The classical solution to minimizing experiment cost is then to find a statistical test that provides the same Type I/II error guarantees at a smaller sample size, such as seeking the uniformly most powerful unbiased test within a particular class \cite{casella}. In many modern applications, however, data typically arrive in streams rather than in sets; that is, observations often arrive in a sequence instead of simultaneously. Therefore, it makes sense that statistical tests optimized for a one-time analysis of a complete collection of observations may not be optimal in experiments where observations arrive sequentially. 

There are practical difficulties in using fixed-$n$ tests in experiments where observations arrive sequentially. Consider these problems first from the perspective of hypothesis \textit{testing}.
The biggest drawback of using a fixed-$n$ test in a sequential application is that it can only be performed \textit{once}.
This is fine in applications where all observations arrive simultaneously or when the experimenter is handed a complete dataset, as there is only one possible opportunity to perform the test. However, if observations arrive in a sequence, the experimenter is presented with many opportunities to perform the test. Perform the test too early, and the Type II error probability will be high, resulting in many small effects being undetected. Perform the test too late, and the experiment may be more costly than is strictly necessary. Sample size calculations also fail to remedy these issues for the following reasons. Closed-form sample size expressions may not exist beyond trivial textbook models, the required inputs are frequently unknown, and most problematically, sample size calculations require the specification of a minimum detectable effect (MDE). The problem with specifying an MDE is that the experimenter, unwilling to sacrifice power even for small effect sizes, typically specifies these to be conservatively low, resulting in quadratic growth of the required sample size and a more costly experiment than strictly necessary (in particular, relative to a sequential design). For instance, consider one-sided ``no-harm'' testing applications where the goal is to detect (possibly small) adverse effects. Specifying a small MDE causes the required sample size to be large, resulting in large adverse effects remaining undetected for lengthy amounts of time and prolonged harm to the experimental units.

The latter example highlights a further practical difficulty with using fixed-$n$ tests in sequential applications from the perspective of \textit{estimation}. The experimenter is often curious about the current performance of each arm, stemming from the concern that an arm might have a substantial adverse effect on those assigned to it. To address this concern, the experimenter might wish to estimate the effect by computing a confidence interval. Unfortunately, the $1-u$ confidence statement obtained by inverting an $u$-level fixed-$n$ test only holds at a fixed-$n$: it is only a one-time guarantee. The experimenter cannot hope to ``monitor" the effect of each arm by computing multiple fixed-$n$ confidence intervals spaced out over different times, as their intersection does not have any coverage guarantee.

Instead of stopping the experiment at a predetermined sample size, it is more natural and useful in sequential applications for the stopping rule to be data-dependent. That is, to perform  \textit{optional stopping}. 
Stopping a test based on whether the data observed so far contains strong evidence for or against a hypothesis removes the need to perform a troublesome sample size calculation and allows the experiment to be terminated adaptively. If conclusions seem unclear at a chosen analysis time, such as confidence statements being insufficiently tight, then it is also useful to allow the test to run for longer. That is, to perform \textit{optional continuation}. Unfortunately, the Type I error and confidence guarantees from fixed-$n$ tests are not preserved under \textit{optional stopping or continuation}. 

Many experimenters do not specify a fixed sample size in advance simply because they have not made up their minds about the requirements of the experiment or the available resources \cite{anscombe}. This can lead to an invalid practice of ``peeking" where a fixed-$n$ test is used to define a stopping rule, and estimands are monitored continuously by repeated fixed-$n$ confidence statements, a procedure that does not possess the statistical guarantees that an experimenter might naively expect \cite{peeking}. Repeated applications of fixed-$n$ tests on accumulating sets of data result in ever-increasing Type I error probabilities \cite{armitage}. A stopping rule configured to stop sampling when a hypothesis is rejected by a fixed-$n$ test is guaranteed to reject the null, allowing experimenters to sample to a foregone conclusion \cite{anscombe, kadane}. Consequently, the intersection of fixed-$n$ confidence sets is guaranteed to converge to the empty set.  

\subsection{Solutions via Sequential Testing}
Sequential designs remain the preferred form of scientific inquiry by many, so these experimenters would benefit greatly from the development of new statistical tests that support the desired operations of continuous monitoring with optional stopping and continuation. The solution presented here generalizes the frequentist guarantees already familiar to many by extending results to hold \textit{for all n} instead of a \textit{fixed-$n$}. We compute a \textit{sequential $p$-value} such that the probability of this being less than $u$ \textit{for any} $n \in \mathbb{N}$ is less than $u$. Similarly, we compute \textit{confidence sequences}: a countable collection of sets such that the probability the estimand is covered \textit{by all} sets, and hence their intersection, is greater than $1-u$. We provide a review of the sequential testing literature in \cref{app:literature_review}.

There are numerous advantages to this approach. Confidence sets and $p$-values remain valid at all times, which enables experimenters to check-in and \textit{continuously monitor} the progress of their experiments. The ability to perform optional stopping allows developers to build a layer of automated stopping logic on top of experiments, reducing risk by quickly eliminating poorly-performing arms and terminating as soon as hypotheses have been rejected. This removes the need for human supervision and helps scale the number of experiments performed by automating their orchestration. This approach also appeals to both Bayesians and Frequentists: despite presenting the frequentist properties, it is fundamentally built upon a Bayes factor. Confidence sequences are constructed for the \textit{vector} of parameters for all arms, providing simultaneous confidence sequences for all contrasts among arms in contrast to pairwise comparison tests which require multiple testing corrections. Lastly, our approach is applicable to situations in which there is time-variability common to all arms, as our methodology is based on an ancillary statistic.

\subsection{Review of Sequential Testing}
\label{app:literature_review}
The earliest work is often attributed to \cite{ville} with the introduction of a \textit{test martingale}. This object is a nonnegative supermartingale under the null hypothesis, and one can use martingale inequalities to construct sequential designs that control Type I error. \citet{wald} introduced the mixture sequential probability ratio test (mSPRT) for testing composite vs simple null hypotheses. The mSPRT can be viewed as a Bayes factor by interpreting the weight function used to integrate the likelihood ratio as a Bayesian prior over the alternative. Testing in a Bayesian framework via the use of Bayes factors is attributed to Jeffreys \cite{jeffreys1935, kass}. Proofs of the validity of Bayes factors with nuisance parameters under varied interpretations of optional stopping are provided by \cite{hendriksen}. The use of Bayes factors for sequential testing exists, therefore, both in a purely Bayesian framework from computing posterior probabilities over hypotheses, and in Wald's mSPRT framework for obtaining frequentist error probabilities. Similarities and differences of both approaches are discussed at length in \cite{conditional_frequentist_simple, conditional_frequentist_precise, conditional_frequentist_nested}. Bayes factors have been used in the design of sequential clinical trials by \cite{cornfield}.  Test martingales can be interpreted as Bayes factors and the inverse of the running supremum can be used to construct sequentially valid $p$-values \cite{shafer}.

\citet{johari} uses Wald's mSPRT to construct \textit{anytime-valid} inference for the difference in two Gaussian means with known variance by using the inverse of the running supremum of the mSPRT test martingale to construct a sequential $p$-value, and use the duality between $p$-values and confidence sets to construct confidence sequences for the difference. The ``anytime-valid'' namesake explicitly refers to the fact that this test is safe under optional stopping, in the sense that we may reject the null at the $u$-level as soon as the sequential $p$-value falls below $u$ without violating the Type I error guarantees. Similarly, confidence intervals have a $1-u$ coverage guarantee at any time, allowing the progress of a statistical test to be continuously monitored and making it robust to the human temptation to peek at results \cite{peeking}. In contrast to our proposal for count data, however, this method requires Gaussian approximations based on central limit theorem arguments and necessitates the use of plugin estimators for unknown parameters. Although this method is observed to work well in practice, these approximations may not be justified at lower sample sizes, so this method's sequential properties may not be strictly guaranteed outside of Gaussian families. Confidence sequences appear as early as \cite{darling67}. \citet{robbins2} showed that it is always possible to disprove a null hypothesis by sequentially collecting data until the null is rejected at the $u$-level by a fixed-$n$ frequentist test, regardless of the chosen value of $u$. This result follows as a consequence of the law of iterated logarithm. The anytime-valid approach through the use of sequential $p$-values and confidence sequences has been greatly extended by \cite{howard}, providing univariate nonparametric and nonasymptotic confidence sequences for broad classes of random variables. Confidence sequences for doubly robust causal estimands are presented in \cite{doublyrobust}. Confidence sequences for sampling without replacement are provided in \cite{samplingreplacement}.

Between fixed and anytime-valid/sequential testing are group sequential testing (GST) methods \cite{group}. In GST a finite and fixed number of analyses are planned as part of the design and are performed upon reaching the pre-specified sequence of sample sizes. There is the opportunity to reject the null hypothesis or continue to the next round at each analysis. GST only partially solves the optional stopping requirement and fails to solve the optional continuation requirement. Optional stopping is only partially solved because analyses can occur at pre-determined sample sizes, when practically the requirement is to perform analyses at pre-determined \textit{times}. Suppose an experimenter wishes to perform analyses every day for a month. Due to varying traffic, there is no guarantee that the pre-determined sample sizes align with every day of the month. Optional continuation is not permitted as one loses the ability to collect more data beyond the final analysis. \citet{gstvssequential} show that for every group sequential test, there exists a fully sequential test that is uniformly better in its ability to stop sooner.

\subsection{Derivation of \cref{eq:bayes_factor} (Bayes Factor)}
\label{app:bayes_factor}
The Bayes factor is defined as 
\begin{equation}
 BF_{01}(x_{1:n})= \frac{p(\bx_{1:n}|M_1)}{p(\bx_{1:n}|M_0)} = \frac{\int p(\bx_{1:n}|\btheta,M_1)p(\btheta|M_1)d\btheta}{p(\bx_{1:n}|M_0)}
\end{equation}
Under the assumptions for $M_0$ and $M_1$ expressed in \cref{eq:multinomialassignment} and \cref{eq:alternativemodel}
\begin{equation}
\begin{split}
        p(\bx_{1:n}|M_1)=&\int p(\bx_{1:n}|\btheta,M_1)p(\btheta|M_1)d\btheta\\
        =&\int \frac{\Gamma\left( \sum_{ij} x_{i,j}+1\right)}{\prod_j \Gamma\left(\sum_i x_{i,j}+1\right)}\prod_j \theta_j^{\sum_{i}x_{i,j}}\frac{\Gamma\left( \sum_j \alpha_{0,j}\right)}{\prod_j \Gamma(\alpha_{0,j})}\prod_j \theta_j^{\alpha_{0,j}-1}d\btheta\\
        =&\frac{\Gamma\left( \sum_{ij} x_{i,j}+1\right)}{\prod_j \Gamma\left(\sum_i x_{i,j}+1\right)}\frac{\Gamma\left( \sum_j \alpha_{0,j}\right)}{\prod_j \Gamma(\alpha_{0,j})}\int \prod_j \theta_j^{\sum_{i}x_{i,j}+\alpha_{0,j}-1} d\btheta\\
        =&\frac{\Gamma\left( \sum_{ij} x_{i,j}+1\right)}{\prod_j \Gamma\left(\sum_i x_{i,j}+1\right)}\frac{\Gamma\left( \sum_j \alpha_{0,j}\right)}{\prod_j \Gamma(\alpha_{0,j})}\frac{\prod_j \Gamma\left(\sum_{i}x_{i,j}+\alpha_{0,j}\right)}{\Gamma \left(\sum_{ij}x_{i,j}+\sum_j\alpha_{0,j} \right)}\\
        p(\bx_{1:n}|M_0) =& \frac{\Gamma\left( \sum_{ij} x_{i,j}+1\right)}{\prod_j \Gamma\left(\sum_i x_{i,j}+1\right)}\prod_j \theta_{0,j}^{\sum_{i}x_{i,j}}
\end{split}
\end{equation}
the result follows from cancelling terms in numerator and denominator.
\begin{equation}
 BF_{01}(x_{1:n})= \frac{p(\bx_{1:n}|M_1)}{p(\bx_{1:n}|M_0)} = \frac{\Gamma(\sum_{j=1}^{d} \alpha_{0,j})}{\Gamma(\sum_{j=1}^{d} \alpha_{0,j} + \sum_{i=1}^{n}x_{i,j})}\frac{\prod_{j=1}^{d}\Gamma(\alpha_{0,j} + \sum_{i=1}^{n}x_{i,j} )}{\prod_{j=1}^{d}\Gamma(\alpha_{0,j} )}\frac{1}{\prod_{j=1}^{d} \theta_{0,j}^{\sum_{i=1}^{n}x_{i,j}}}.
\end{equation}
It is helpful to introduce some further notation to explicitly express the sequential nature inherent to the problem.
\subsection{Derivation of \cref{eq:update_rule} (Sequential Posterior Odds Updating)}
\label{app:odds_updating}
The Posterior odds in favor of $M_1$ to $M_0$ after observing $\bx_{1:n}$ is defined as
\begin{align}
  \label{eq:general_posterior_odds}
  \frac{p(M_1|\bx_{1:n})}{p(M_0|\bx_{1:n})}  &= \frac{\int p(\bx_{1:n}|\btheta,M_1)p(\btheta,M_1)d\btheta}{p(\bx_{1:n}|M_0)}\frac{p(M_1)}{p(M_0)},\\
                      &=\frac{p(\bx_{1:n}|M_1)}{p(\bx_{1:n}|M_0)}\frac{p(M_1)}{p(M_0)},\\
                      &=\frac{\prod_{i=1}^{n}p(\bx_i|\bx_{1:i-1}|M_1)}{\prod_{i=1}^{n}p(\bx_i|\bx_{1:i-1}|M_0)}\frac{p(M_1)}{p(M_0)},\\
                      &=\frac{p(\bx_n|\bx_{1:n-1},M_1)}{p(\bx_n|\bx_{1:n-1},M_0)} \frac{p(M_1|\bx_{1:n-1})}{p(M_0|\bx_{1:n-1})},\\
    &=\frac{\int p(\bx_n|\btheta,\bx_{1:n-1},M_1)p(\btheta|\bx_{1:n-1},M_1)d\btheta}{p(\bx_n|\bx_{1:n-1},M_0)}  \frac{p(M_1|\bx_{1:n-1})}{p(M_0|\bx_{1:n-1})} ,
\end{align}
where the last expression stresses the recursive definition of the Posterior odds factor in terms of products of posterior predictive densities.
The posterior distribution of $\btheta| \bx_{1:n}, M_1 \sim \text{Dirichlet}(\balpha_n)$ where $\balpha_n = \balpha_{n-1}+\bx_n$ with $\balpha_0$ the initial prior parameter choice.
The posterior predictive densities are easily computed as
\begin{equation}
  \label{eq:posterior_predictive_m1}
   p(\bx_n|\bx_{1:n-1},M_1) = \frac{ \Gamma(\sum_i x_{n,i}+ 1)}{\prod_i \Gamma(x_{n,i} + 1)} \frac{\Gamma(\sum_i \alpha_{n-1,i})}{\prod_i \Gamma(\alpha_{n-1,i})} \frac{\prod_i \Gamma(\alpha_{n-1,i} + x_{n,i})}{\Gamma(\sum_i \alpha_{n-1,i} + x_{n,i})},
\end{equation}
and
\begin{equation}
  \label{eq:posterior_predictive_m2}
   p(\bx_n|\bx_{1:n-1},M_0) = \frac{ \Gamma(\sum_i x_{n,i} + 1)}{\prod_i \Gamma(x_{n,i} + 1)} \prod \theta_{0,i}^{x_{n,i}}.
 \end{equation}
It follows that
\begin{align}
  \frac{p(M_1|\bx_{1:n})}{p(M_0|\bx_{1:n})}  &= \frac{\Gamma(\sum_i \alpha_{n-1,i})}{\Gamma(\sum_i \alpha_{n-1,i} +  x_{n,i})} \frac{\prod_i \Gamma(\alpha_{n-1,i} + x_{n,i})}{\prod_i \Gamma(\alpha_{n-1,i})} \frac{1}{\prod_i \theta_{0,i}^{x_{n,i}}}  \frac{p(M_1|\bx_{1:n-1})}{p(M_0|\bx_{1:n-1})} ,\\
\end{align}
where
\begin{align}
  \label{eq:alpha_update}
  \balpha_{n}&= \balpha_{n-1}+\bx_n.
\end{align}

\subsection{Proof of \cref{thm:posterior_odds_martingale} (Martingale Property of Posterior Odds)}
\label{app:odds_martingale}
  \begin{proof}
  \begin{align*}
    E_{M_0}[O_{n+1}(\btheta_0)|\mathcal{F}_n]  &= \int \frac{p(\bx_{n+1}|\bx_{1:n},M_1)}{p(\bx_{n+1}|\bx_{1:n},M_0)} O_{n}(\btheta_0) p(\bx_{n+1}|\bx_{1:n},M_0) d\bx_{n+1}\\
    &=  O_{n}(\btheta_0) \int p(\bx_{n+1}|\bx_{1:n},M_1) d\bx_{n+1}\\
    &=  O_{n}(\btheta_0),
  \end{align*}
  where $\mathcal{F}_{n} = \sigma(\bx_1,\bx_2,\dots, \bx_n)$.
\end{proof}
\subsection{Proof of \cref{thm:type_1_error} (Construction of Test-Martingale)}
\label{app:test_martingale}
First, the following lemma is required
\begin{lemma}(Ville's Maximal Inequality)
\label{lem:ville}
  \noindent If $Z_{n}$ is a nonnegative supermartingale with respect to the filtration $\mathcal{F}_n$, then
  \begin{equation}
    \label{eq:test_martingale}
    \mathbb{P}[\exists n \in \mathbb{N}_0 : Z_n \geq u] \leq \frac{Z_0}{u}
  \end{equation}
\end{lemma}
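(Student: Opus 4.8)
The plan is to prove this by an optional-stopping argument applied to a stopped version of the process, followed by a passage to the limit. First I would introduce the stopping time $\tau = \inf\{n \in \mathbb{N}_0 : Z_n \geq u\}$, with the convention $\tau = \infty$ if the level $u$ is never attained, so that the event of interest is exactly $\{\exists n \in \mathbb{N}_0 : Z_n \geq u\} = \{\tau < \infty\}$. The strategy is then to bound $\mathbb{P}(\tau \leq n)$ uniformly in the finite horizon $n$ and let $n \to \infty$.

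The central step is to show that the stopped process $Z_{n \wedge \tau}$ is again a nonnegative supermartingale. I would verify this directly: writing $Z_{(n+1)\wedge\tau} - Z_{n\wedge\tau} = (Z_{n+1}-Z_n)\mathbf{1}\{\tau > n\}$ and noting that $\{\tau > n\} \in \mathcal{F}_n$, the supermartingale property of $Z_n$ gives $\mathbb{E}[Z_{(n+1)\wedge\tau} \mid \mathcal{F}_n] \leq Z_{n\wedge\tau}$, since the $\mathcal{F}_n$-measurable indicator is nonnegative. Taking expectations and iterating (with $Z_0$ deterministic in the intended application) yields $\mathbb{E}[Z_{n\wedge\tau}] \leq \mathbb{E}[Z_0] = Z_0$ for every $n$.

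Next I would produce a matching lower bound. Splitting the expectation over $\{\tau \leq n\}$ and its complement, the contribution from $\{\tau > n\}$ is nonnegative by nonnegativity of $Z$, while on $\{\tau \leq n\}$ we have $n \wedge \tau = \tau$ and $Z_\tau \geq u$ by the definition of $\tau$. Hence $\mathbb{E}[Z_{n\wedge\tau}] \geq u\,\mathbb{P}(\tau \leq n)$. Combining with the upper bound gives $\mathbb{P}(\tau \leq n) \leq Z_0/u$ for every finite $n$. Finally, since the events $\{\tau \leq n\}$ increase to $\{\tau < \infty\}$, continuity of probability from below gives $\mathbb{P}(\exists n : Z_n \geq u) = \mathbb{P}(\tau < \infty) = \lim_{n\to\infty}\mathbb{P}(\tau \leq n) \leq Z_0/u$, as claimed.

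The step I would be most careful with is the stopped-supermartingale claim: it is what lets me avoid any integrability hypotheses beyond those built into the supermartingale definition, since stopping at the bounded time $n \wedge \tau$ sidesteps the need for uniform integrability. It also crucially relies on nonnegativity, which is exactly what justifies discarding the $\{\tau > n\}$ contribution in the lower-bound step; without it the argument would break.
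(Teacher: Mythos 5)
Your proof is correct. For this lemma the paper does not supply an argument at all --- its ``proof'' is simply a pointer to the references \cite{ville, howard2} --- so there is no in-paper derivation to match against; what you have written is the standard self-contained argument (essentially the one found in those references). Each step checks out: $\{\tau > n\} \in \mathcal{F}_n$ because $\tau$ is a stopping time, so the increment decomposition $Z_{(n+1)\wedge\tau} - Z_{n\wedge\tau} = (Z_{n+1}-Z_n)\mathbf{1}\{\tau > n\}$ does yield that the stopped process is a supermartingale; nonnegativity is exactly what licenses discarding the $\{\tau > n\}$ contribution in the lower bound $\mathbb{E}[Z_{n\wedge\tau}] \geq u\,\mathbb{P}(\tau \leq n)$; and since time is discrete the infimum defining $\tau$ is attained, so $Z_\tau \geq u$ on $\{\tau < \infty\}$ as you use. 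Your choice to stop at the bounded time $n\wedge\tau$ and pass to the limit by continuity from below, rather than invoking a general optional stopping theorem at the possibly infinite time $\tau$, is the right move, as it avoids any uniform-integrability hypothesis. One small point you handled correctly but that is worth making explicit: as stated, the lemma has the random variable $Z_0$ on the right-hand side of a probability bound, which only makes literal sense when $Z_0$ is deterministic (as it is in the paper's application, where $O_0(\btheta_0)=1$); for random $Z_0$ the bound should read $\mathbb{E}[Z_0]/u$, which is what your iteration actually delivers.
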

\begin{proof}
  See \cite{ville, howard2}
\end{proof}
\cref{thm:posterior_odds_martingale} shows that $O_n(\btheta_0)$ is a nonnegative martingale (and therefore also a supermartingale) under the null hypothesis with initial value $O_0(\btheta_0)=1$. The result then follows immediately from \cref{lem:ville}.

\subsection{Proof of \cref{thm:consistency} (Asymptotic Properties of Bayes Factors)}
From \cref{eq:update_rule}
\label{app:asymptotic}
\begin{equation}
\begin{split}
   \log O_n(\btheta_0) =& \log \Beta(\balpha_0 + \bS_n) - \log \Beta(\balpha_0) - \sum_i S^n_i \log \btheta_{0,i}\\
   =& \sum_i \log \Gamma(\alpha_{0,i} + S^n_i) - \log \Gamma(|\balpha_0 + \bS^n|) +\\
   &\log \Gamma(|\balpha_0|)-\sum_i \log \Gamma(\alpha_{0,i})\\
   &- \sum_i S^n_i \log \btheta_{0,i}\\
\end{split}
\end{equation}
Using Stirlings approximation $\log \Gamma(z) = z\log z -z +  o(\log z)$
\begin{equation}
\begin{split}
   \log O_n(\btheta_0)
   =& \sum_i (\alpha_{0,i} + S^n_i)\log(\alpha_{0,i} + S^n_i) - (\alpha_{0,i} + S^n_i)\\
   &- (|\balpha_0| +n)\log(|\balpha_0| + n)+(|\balpha_0| + n) +\\
   &- \sum_i S^n_i \log \theta_{0,i} + o(\log n)\\
   =& \sum_i (\alpha_{0,i} + S^n_i)\log\left(\frac{\alpha_{0,i} + S^n_i}{(|\balpha_0| +n)}\right) \\
   &- \sum_i S^n_i \log \theta_{0,i} + o(\log n)\\
    =& \sum_i  S^n_i\log\left(\frac{\alpha_{0,i} + S^n_i}{(|\balpha_0| +n)}\frac{1}{\theta_{0,i}}\right) + o(\log n)\\
    \frac{1}{n}   \log O_n(\btheta_0) = & \sum_i  \frac{S^n_i}{n}\log\left(\frac{\alpha_{0,i} + S^n_i}{(|\balpha_0| +n)}\frac{1}{\theta_{0,i}}\right) + o\left(\frac{\log n}{n}\right)
\end{split}
\end{equation}
$\frac{S^n_i}{n}$ and $\frac{\alpha_{0,i} + S^n_i}{(|\balpha_0| +n)}$ converge to $\theta_{i}$ almost surely by the strong law of large numbers. It follows that
\begin{equation}
   \frac{1}{n}   \log O_n(\btheta_0) \stackrel{\text{a.s.}}{\rightarrow} \sum_i \theta_i \log \left(\frac{\theta_i}{\theta_{0,i}}\right),
\end{equation}
by Slutsky's theorem and the continuous mapping theorem, which can be recognized as the Kullback-Leibler divergence of a $\mathrm{Multinomial}(1,\btheta)$ distribution from a $\mathrm{Multinomial}(1,\btheta_0)$ distribution.

\subsection{Type I Error Probability Simulation}
\label{app:simulation_studies}
\begin{figure}[h]
\vskip 0.2in
\begin{center}
\centerline{\includegraphics[width=\columnwidth]{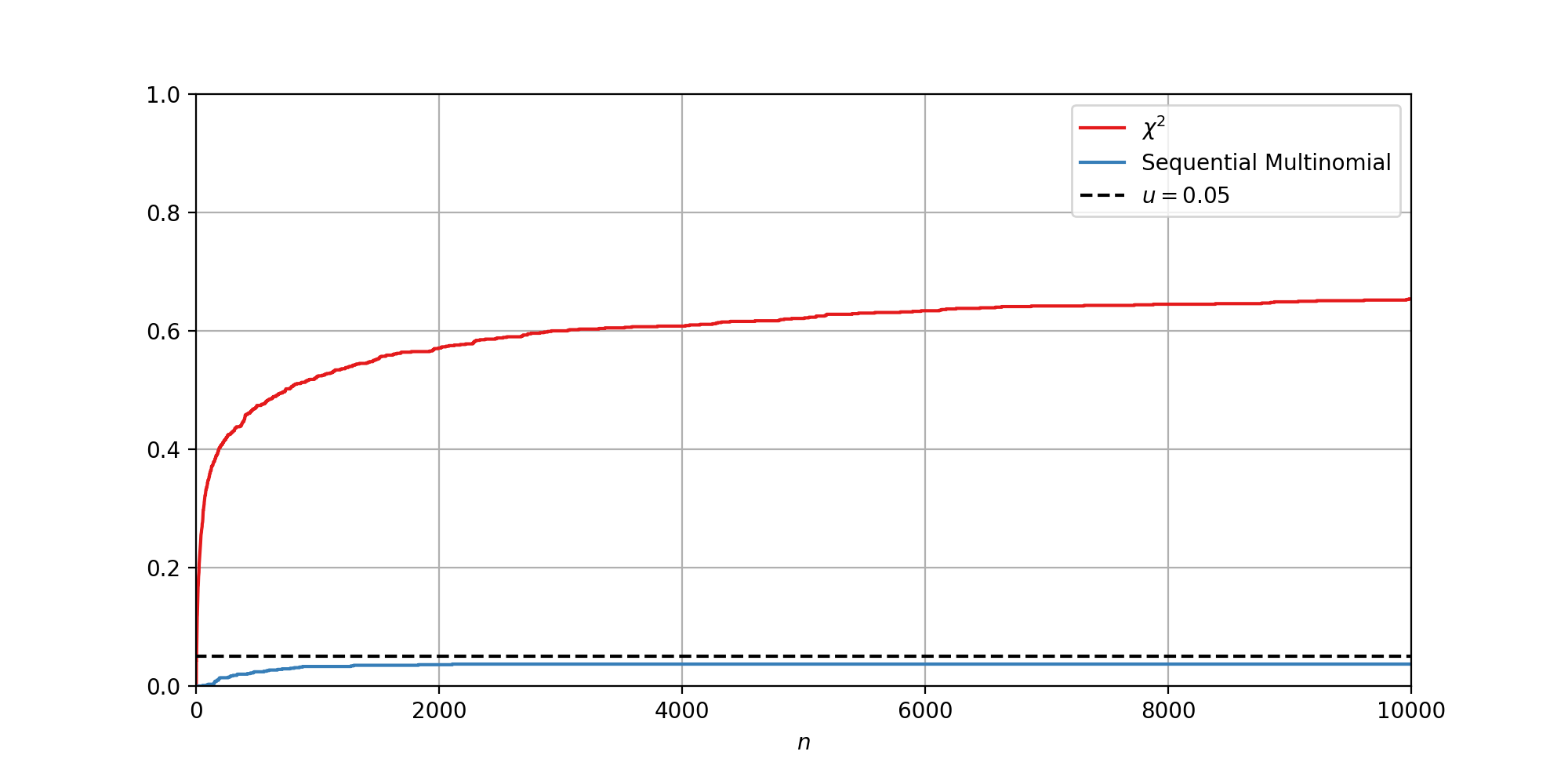}}
\caption{Estimated probability of falsely rejecting the null by sample size $n$ under stopping rules (red) when the $\chi^2$ $p$-value falls below $0.05$ (blue) when the sequential $p$-value from \cref{eq:conservative_p_value} falls below 0.05. Estimates based on 10000 simulations. Null rejected incorrectly 654 and 37 times by $\chi^2$ and sequential multinomial tests respectively.}
\label{fig:continuous_monitoring}
\end{center}
\vskip -0.2in
\end{figure}
\subsection{Proof of \cref{thm:poissonmultinomial}}
\label{app:poisson}
The following two lemmas are required prove \cref{thm:poissonmultinomial}. Consider a poisson point process the intensity function $\lambda(t)$. The probability density over the event time $t_{i}$ conditional on the previous event having arrived at time $t_{i-1}$ is given by
\begin{equation}
    p(t_{i}|t_{i-1}) = \lambda(t_{i})e^{-\int_{t_{i-1}}^{t_i}\lambda(s)ds}1_{(t_{i-1}, \infty]}(t_i)
\end{equation}
This is used to prove the following lemma
\begin{lemma}
\label{lem:next_time}
Let $t_{i-1}$ denote the time of the previously observed event. Suppose the current time is $T > t_{i-1}$. The probability density over the next event time $t_{i}$ conditional on no observation having occured in $(t_{i-1},T]$ is given by 
\begin{equation}
  p(t_{i} | t_{i} > T) = \lambda(t_{i})e^{-\int_T^{t_i}\lambda(s) ds}1_{(T, \infty]}(t_i)
\end{equation}
\end{lemma}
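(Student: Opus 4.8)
The plan is to apply the definition of conditional density directly to the unconditional density supplied just above the lemma statement. Conditioning $p(t_i \mid t_{i-1})$ on the event $\{t_i > T\}$ amounts to restricting to the region $t_i > T$ and renormalizing, so I would write
\begin{equation*}
p(t_i \mid t_i > T) = \frac{p(t_i \mid t_{i-1})}{\mathbb{P}(t_i > T \mid t_{i-1})}\, 1_{(T,\infty]}(t_i).
\end{equation*}

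The first step is to evaluate the normalizing denominator, namely the survival probability that no event occurs in $(t_{i-1}, T]$. This can be obtained either by integrating the given density over $(T,\infty)$, or directly from the defining Poisson property that $N((t_{i-1},T])$ is $\mathrm{Poisson}(\int_{t_{i-1}}^{T}\lambda(s)\,ds)$ and reading off its mass at zero. Either route gives
\begin{equation*}
\mathbb{P}(t_i > T \mid t_{i-1}) = e^{-\int_{t_{i-1}}^{T}\lambda(s)\,ds}.
\end{equation*}

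The second step is to substitute the unconditional density and this survival probability and simplify. For $t_i > T$ the exponents combine through additivity of the integral, $-\int_{t_{i-1}}^{t_i}\lambda(s)\,ds + \int_{t_{i-1}}^{T}\lambda(s)\,ds = -\int_{T}^{t_i}\lambda(s)\,ds$, which cancels all dependence on $t_{i-1}$ and produces exactly $\lambda(t_i)e^{-\int_{T}^{t_i}\lambda(s)\,ds}1_{(T,\infty]}(t_i)$.

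This is essentially a one-line computation, so I anticipate no serious obstacle; the only point deserving care is justifying the survival probability, which I would obtain from the independent-increments/Poisson structure rather than treating the integral of the density as self-evident, and checking that the original indicator $1_{(t_{i-1},\infty]}$ is automatically $1$ on the region $t_i > T > t_{i-1}$. Conceptually, the content of the lemma is that conditioning on no event up to time $T$ erases the history before $T$, so that the forward density is governed solely by $\lambda$ on $(T, t_i)$; this is precisely the memoryless behavior that is exploited in the proof of \cref{thm:poissonmultinomial}.
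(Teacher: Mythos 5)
Your proposal is correct and follows essentially the same route as the paper's proof: both obtain the survival probability $e^{-\int_{t_{i-1}}^{T}\lambda(s)\,ds}$ from the Poisson count $N((t_{i-1},T])=0$ and then divide the unconditional density $p(t_i\mid t_{i-1})$ by it, letting the exponents combine to cancel the dependence on $t_{i-1}$. Your added care about the indicator $1_{(t_{i-1},\infty]}(t_i)$ being identically $1$ on $\{t_i>T\}$ is a minor point the paper leaves implicit, but there is no substantive difference in approach.
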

\begin{proof}
  The probability of no event taking place $(t_{i-1},T]$ is given by the $\mathrm{Poisson}(\Lambda(t_{i-1},T])$ distribution
  \begin{equation}
      \mathbb{P}[N(t_{i-1},T] = 0] = e^{-\int_{t_{i-1}}^T \lambda(s)ds}
  \end{equation}
  The conditional density for $t_i$ given $t_i > T$ is obtained by conditioning on $N(t_{i-1},T] = 0$ as follows
  \begin{equation}
      \begin{split}
          p(t_{i} | t_{i} > T) &= \frac{\lambda(t_{i})e^{-\int_{t_{i-1}}^{t_i}\lambda(s)ds}1_{(T, \infty]}(t_i)}{e^{-\int_{t_{i-1}}^T \lambda(s)ds}}\\
          &= \lambda(t_{i})e^{-\int_T^{t_i}\lambda(s) ds}1_{(T, \infty]}(t_i)
      \end{split}
  \end{equation}
\end{proof}
The following lemma asks, given two inhomogeneous Poisson point processes $0$ and $1$, what is the probability that the next event comes from $0$?
\begin{lemma}
\label{lem:whichprocess}
Consider two inhomogeneous Poisson point processes with intensities $\lambda_0(t)=e^{\delta_0}\lambda(t)$ and $\lambda_1=e^{\delta_1}\lambda(t)$. Let the current time be denoted $T$. The probability that the next event is from process $1$ is given by
\begin{equation}
    \frac{e^{\delta_1}}{e^{\delta_0}+e^{\delta_1}}
\end{equation}
\begin{proof}
  Let $\tau_0$ and $\tau_1$ denote the next event times from process 0 and 1 respectively. From \cref{lem:next_time} 
  \begin{equation*}
      \begin{split}
            p(\tau_1| \tau_1 > T) &= \lambda_1(\tau_1)e^{-\Lambda_1(T,{\tau_1}]}1_{(T, \infty]}(\tau_1)\\
            p(\tau_0 | \tau_0> T) &= \lambda_0(\tau_0)e^{-\Lambda_0(T,{\tau_0}]}1_{(T, \infty]}(\tau_0),\\
      \end{split}
  \end{equation*}
  where $\Lambda_i(T,{\tau_i}] = \int_T^{\tau_i}\lambda_i(s) ds $
  The probability that the next event is from process $1$ is given by
  \begin{equation*}
      \begin{split}
            \mathbb{P}[\tau_1< \tau_0 | \tau_0,\tau_1 > T]&=\int_T^{\infty}\lambda_0(\tau_0)e^{-\Lambda_0(T,\tau_0]}\int_T^{\tau_0}\lambda_1(\tau_1)e^{-\Lambda_1(T, \tau_1]} d\tau_1 d\tau_0\\
            &=\int_T^{\infty}\lambda_0(\tau_0)e^{-\Lambda_0(T,\tau_0]}\left(1-e^{-\Lambda_1(T,\tau_0]}\right)d\tau_0\\
            &=1 -\int_T^{\infty}\lambda_0(\tau_0)e^{-\int_T^{\tau_0}\lambda_0(s)+\lambda_1(s)}d\tau_0\\
            &=1 - \frac{e^{\delta_0}}{(e^{\delta_0}+e^{\delta_1})}\int_T^{\infty}(e^{\delta_0}+e^{\delta_1})\lambda(\tau_0)e^{-(e^{\delta_0}+e^{\delta_1})\int_T^{\tau_0}\lambda(s)}d\tau_0\\
            &=1 - \frac{e^{\delta_0}}{(e^{\delta_0}+e^{\delta_1})}\\
             &=\frac{e^{\delta_1}}{(e^{\delta_0}+e^{\delta_1})}\\
      \end{split}
  \end{equation*}
\end{proof}
\end{lemma}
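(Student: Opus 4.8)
The plan is to treat this as a competing-risks comparison between the two independent next-event times and then to exploit the fact that both intensities are proportional to the same base intensity $\lambda(t)$, which collapses the whole computation to a single normalization constant. First I would apply \cref{lem:next_time} separately to each process. Let $\tau_0$ and $\tau_1$ denote the next event times of processes $0$ and $1$ after the current time $T$. The lemma supplies the conditional densities $p(\tau_i\mid\tau_i>T)=\lambda_i(\tau_i)\,e^{-\Lambda_i(T,\tau_i]}\,\mathbf{1}_{(T,\infty]}(\tau_i)$, where $\Lambda_i(T,\tau_i]=\int_T^{\tau_i}\lambda_i(s)\,ds$. Because the two processes are independent, their next-event times are independent, so the joint density factorizes as the product of the two marginals; moreover the event ``the next event is from process $1$'' is precisely $\{\tau_1<\tau_0\}$, with $\{\tau_0=\tau_1\}$ a null event for continuous processes.

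Next I would evaluate $\mathbb{P}[\tau_1<\tau_0]$ by integrating the product density over the region $\tau_1<\tau_0$, integrating out $\tau_1$ first:
\begin{equation*}
\mathbb{P}[\tau_1<\tau_0]=\int_T^\infty\lambda_0(\tau_0)\,e^{-\Lambda_0(T,\tau_0]}\left(\int_T^{\tau_0}\lambda_1(\tau_1)\,e^{-\Lambda_1(T,\tau_1]}\,d\tau_1\right)d\tau_0.
\end{equation*}
The inner integrand is the derivative of $-e^{-\Lambda_1(T,\cdot]}$, so the inner integral evaluates to $1-e^{-\Lambda_1(T,\tau_0]}$. Substituting and splitting the outer integral yields $1-\int_T^\infty\lambda_0(\tau_0)\,e^{-\int_T^{\tau_0}(\lambda_0(s)+\lambda_1(s))\,ds}\,d\tau_0$, where I have used that $\int_T^\infty\lambda_0(\tau_0)\,e^{-\Lambda_0(T,\tau_0]}\,d\tau_0=1$. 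The crucial step is then to write $\lambda_0(s)+\lambda_1(s)=(e^{\delta_0}+e^{\delta_1})\lambda(s)$ and $\lambda_0(\tau_0)=e^{\delta_0}\lambda(\tau_0)$, and to factor out $e^{\delta_0}/(e^{\delta_0}+e^{\delta_1})$ so that the remaining integrand $(e^{\delta_0}+e^{\delta_1})\lambda(\tau_0)\,e^{-(e^{\delta_0}+e^{\delta_1})\int_T^{\tau_0}\lambda(s)\,ds}$ is itself a valid next-event density (for a process of intensity $(e^{\delta_0}+e^{\delta_1})\lambda$) and therefore integrates to $1$. This leaves $\mathbb{P}[\tau_1<\tau_0]=1-e^{\delta_0}/(e^{\delta_0}+e^{\delta_1})=e^{\delta_1}/(e^{\delta_0}+e^{\delta_1})$, as claimed.

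The main obstacle, and really the only delicate point, is justifying that these improper integrals equal $1$, which is exactly the statement that a next event occurs almost surely. This requires the non-explosion / non-termination condition $\int_T^\infty\lambda(s)\,ds=\infty$, so that $e^{-\Lambda_i(T,\tau_i]}\to0$ as $\tau_i\to\infty$; I would record this as a standing regularity assumption on the locally integrable intensity. The two remaining ingredients are routine: independence of the processes, which legitimizes the product density, and the almost-sure absence of simultaneous events, which makes $\{\tau_1<\tau_0\}$ and $\{\tau_0<\tau_1\}$ a partition up to a null set so that the computed probability is the full answer.
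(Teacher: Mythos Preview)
Your proposal is correct and follows essentially the same route as the paper: apply \cref{lem:next_time} to each process, integrate the product density over $\{\tau_1<\tau_0\}$, evaluate the inner integral as $1-e^{-\Lambda_1(T,\tau_0]}$, and then factor out $e^{\delta_0}/(e^{\delta_0}+e^{\delta_1})$ so the remaining integrand is a bona fide next-event density integrating to $1$. Your treatment is in fact slightly more careful than the paper's in flagging the non-termination condition $\int_T^\infty\lambda(s)\,ds=\infty$ and the independence and no-ties points, which the paper leaves implicit.
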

The proof of \cref{thm:poissonmultinomial} can now be given using \cref{lem:whichprocess} and the superposition property \cite{kingman} of Poisson processes.
\begin{proof}
Consider $d$ inhomogeneous Poisson point processes with intensity functions $\lambda_i(t) = \rho_i e^{\delta_i}\lambda(t)$ for $i \in \lbrace 1, 2, \dots, d \rbrace$. Choose any process of interest $j$, with corresponding intensity $\rho_j e^{\delta_j}\lambda(t)$. Let the union of all timestamps from the other $i\neq j$ processes be combined into a new "not $j$" process. By the superposition property of the Poisson process, the combined timestamps form a new Poisson process with intensity function $\lambda_u(t)=\sum_{i\neq j}\rho_i e^{\delta_i}\lambda(t)$. This reduces the problem to a comparison of two inhomogeneous Poisson point processes. From \cref{lem:whichprocess}, the probability that the next event corresponds to process $j$ is then
\begin{equation}
    \frac{\rho_j e^{\delta_j}}{\rho_j e^{\delta_j} + \sum_{i\neq j}\rho_i e^{\delta_i}} =  \frac{\rho_j e^{\delta_j}}{ \sum_{i}\rho_i e^{\delta_i}}
\end{equation}
\end{proof}

\end{document}